\documentclass[journal]{IEEEtran}
\IEEEoverridecommandlockouts

\makeatletter
\def\markboth#1#2{\def\leftmark{\@IEEEcompsoconly{\sffamily}\MakeUppercase{\protect#1}}%
\def\rightmark{\@IEEEcompsoconly{\sffamily}\MakeUppercase{\protect#2}}}
\makeatother





 \PassOptionsToPackage{bookmarks={false}}{hyperref}

\usepackage[utf8x]{inputenc}
\usepackage[english]{babel}
\selectlanguage{english}
\setlength{\textwidth}{6.5in}
\usepackage{ucs}
\usepackage{amsmath}
\usepackage{amsfonts}
\usepackage{amssymb}
\usepackage{amsthm}
\usepackage{array}
\usepackage{verbatim}
\usepackage{listings}
\usepackage{psfrag}
\usepackage{stfloats}

\usepackage{algorithm}
\usepackage{algorithmic}
\usepackage{url}
  \usepackage{enumerate}
  \usepackage{cite}
\usepackage[usenames,dvipsnames,svgnames,table]{xcolor}
\usepackage{setspace}

%
\ifCLASSINFOpdf
  \usepackage[pdftex]{graphicx}
  \usepackage{subfigure}
  \usepackage{epstopdf}
  \graphicspath{{./}}
   \DeclareGraphicsExtensions{.pdf}
\else
  \usepackage[dvipdf]{graphicx}
  \usepackage{subfigure}
  \graphicspath{{./figures/}}
   \DeclareGraphicsExtensions{.eps,.ps}
\fi

\newcommand{\Sb}{\mathbf{\Sigma}}

\newcommand{\F}{\mathbf{F}}

\newcommand{\A}{\mathbf{A}}
\newcommand{\B}{\mathbf{B}}
\newcommand{\Cb}{\mathbf{C}}
\newcommand{\D}{\mathbf{D}}

\newcommand{\I}{\mathbf{I}}

\newcommand{\Pb}{\mathbf{P}}
\newcommand{\Phib}{\mathbf{\boldsymbol{\Phi}}}
\newcommand{\Upsb}{\mathbf{\boldsymbol{\Upsilon}}}

\newcommand{\cc}{\mathbf{c}}
\newcommand{\h}{\mathbf{h}}
\newcommand{\x}{\mathbf{x}}

\newcommand{\vv}{\mathbf{v}}

\newcommand{\y}{\mathbf{y}}
\newcommand{\bb}{\mathbf{b}}

\newcommand{\vphi}{\mathbf{\boldsymbol{\phi}}}

\newcommand{\rr}{\mathbf{r}}
\newcommand{\z}{\mathbf{z}}
\newcommand{\ab}{\mathbf{a}}
\newcommand{\pp}{\mathbf{p}}

\newcommand{\tr}{\textnormal{tr}}









\newcommand{\Ex}[2]{{\textnormal{E}_{#1}\left[#2\right]}}

\usepackage{pifont}
\newcommand\Tau{\mathcal{T}}

%
\newtheorem{definition}{Definition}
\newtheorem{lemma}{Lemma}

\theoremstyle{plain}

   \definecolor{blueH3}{rgb}{0,.5,1}
   \definecolor{blueH2}{rgb}{0,0.25,0.75}
   \definecolor{blueH1}{rgb}{0,0,0.5}   
   \definecolor{grayOldText}{rgb}{.5,.5,.5}
   \definecolor{VCobalt}{HTML}{005682}
   \definecolor{TZTeal}{HTML}{008080}
   \definecolor{KYJade}{HTML}{008151}
   \definecolor{ARust}{HTML}{a10000}
   \definecolor{FFucsia}{HTML}{7000c3}
   

\newcommand{\CASE}[1]{\STATE \textbf{case} #1\textbf{:} \begin{ALC@g}}
\newcommand{\ENDCASE}{\end{ALC@g}}

\newcommand{\DEFAULT}{\STATE \textbf{default:} \begin{ALC@g}}
\newcommand{\ENDDEFAULT}{\end{ALC@g}}
\newcommand{\DEFAULTLINE}[1]{\STATE \textbf{default:} }

\newcounter{MYtempeqncnt}

\begin{document}
\sloppy

\title{Compressed Sensing Channel Estimation for OFDM with non-Gaussian Multipath Gains}

  \author{
  Felipe G\'omez-Cuba, ~\IEEEmembership{Member,~IEEE}; Andrea J. Goldsmith,~\IEEEmembership{Fellow,~IEEE}
  \thanks{ 
    Part of this work appeared in IEEE ICC 2019 \cite{gomezcuba2019CSicc}.  
    Felipe Gomez-Cuba is with Dipartimento Di Ingegneria Dell'Informazione, University of Padova, Italy,  Email: \texttt{gomezcuba@dei.unipd.it}. Andrea Goldsmith is with the Electrical Engineering department, Stanford University, USA,  Email: \texttt{andreag@stanford.edu}. This project has received funding from the European Union's Horizon 2020 research and innovation programme under the Marie Sk\l{}odowska-Curie grant agreement No 704837 and from NSF grants ECCS-BSF1609695 and CCF-1320628.}
}
\maketitle

\begin{abstract}
This paper analyzes the impact of non-Gaussian multipath component (MPC) amplitude distributions on the performance of Compressed Sensing (CS) channel estimators for OFDM systems. The number of \textit{dominant MPCs} that any CS algorithm needs to estimate in order to accurately represent the channel is characterized. This number relates to a Compressibility Index (CI) of the channel that depends on the fourth moment of the MPC amplitude distribution. A connection between the Mean Squared Error (MSE) of any CS estimation algorithm and the MPC amplitude distribution fourth moment is revealed that shows a smaller number of MPCs is needed to well-estimate channels when these components have large fourth moment amplitude gains. The analytical results are validated via simulations for channels with lognormal MPCs such as the NYU mmWave channel model. These simulations show that when the MPC amplitude distribution has a high fourth moment, the well known CS algorithm of Orthogonal Matching Pursuit performs almost identically to the Basis Pursuit De-Noising algorithm with a much lower computational cost.
\end{abstract}

\begin{IEEEkeywords}
 Multi-path fading, MMSE, sparse OFDM channel estimation
\end{IEEEkeywords}

\section{Introduction}
As demand for higher data rates continues to grow, the wireless industry is looking to support it through increased bandwidth transmissions. 
Extensive measurements indicate that many wireless channels, especially those in the mmWave band \cite{Mathew2016,Specification2017}, experience \textit{sparse scattering} in which only a few multipath reflections reach the receiver with arrival delays spread over a relatively long time interval. Orthogonal Frequency Division Multiplexing (OFDM) systems separate this frequency-selective channel into a series of frequency-flat subcarriers with scalar gains \cite{Negi1998,Coleri2002,Morelli2001}. When the number of multipath components (MPC) is significantly smaller than the number of channel taps in the time domain or subcarriers in the frequency domain, the OFDM channel does not exhibit independent coefficients.

Wireless channel models that do not assume sparsity typically assume a complex Gaussian distribution for each channel sample. This is justified for digital systems with moderate sample rates where several MPCs arrive during the same sampling period, and the Central Limit Theorem (CLT) is invoked over their sum \cite{goldsmith2005book}. Channel estimation in OFDM systems without sparsity has been extensively studied, for example in \cite{Negi1998,Coleri2002,Morelli2001}.

Compressive Sensing (CS) is a general framework for the estimation of sparse vectors from linear measurements \cite{Duarte2011}. CS has been extensively applied to sparse channel estimation, under different notions of sparsity in the channel model. In particular, some CS OFDM channel estimation studies consider a channel to be sparse when many of its discrete taps are zero and the other taps are the sum of many MPCs of similar delay. In this model the non-zero channel taps follow a Gaussian distribution by the CLT, and hence their amplitude is Rayleigh distributed \cite{Simeone2004,Karabulut2004,ChunJungWu2005,He2010,Haupt2010,Qi2011a,Berger2010,Meng2012,Taubock2008,Qi2011,Berger2010a,Prasad2014,Wang2016,Ling2016}. In some massive MIMO frequency-flat channel models, sparsity is defined with respect to the angles of departure and arrival \cite{Rao2014,Kokshoorn2015,Alkhateeb2015,Rodriguez2016,Schniter,Gao2016freq,Marzi2016a,Lee2016,Liu2016b}. In frequency-selective mmWave systems with sparsity in the number of MPCs (i.e. joint angle and delay sparsity), CS estimators have been designed in the time domain \cite{7953407,Mo2017}, OFDM \cite{rodriguez2017frequency} and joint time-frequency domains \cite{Venugopal2017}. In general the CS frequency-selective mmWave channel estimation literature has either assumed the Gaussian MPC gain model with Rayleigh MPC amplitudes \cite{Mo2017,rodriguez2017frequency}, or has not specified the MPC amplitude distribution \cite{7953407,Venugopal2017}. In general, neither the frequency-flat \cite{Kokshoorn2015,Alkhateeb2015,Rodriguez2016,Schniter,Gao2016freq,Marzi2016a,Lee2016} nor the frequency-selective CS channel estimation literature \cite{Simeone2004,Karabulut2004,ChunJungWu2005,He2010,Haupt2010,Qi2011a,Berger2010,Meng2012,Taubock2008,Qi2011,Berger2010a,Prasad2014,Wang2016,Ling2016,7953407,Mo2017,rodriguez2017frequency,Venugopal2017} have discussed the impact that different MPC amplitude distributions might have on estimator performance.

In some wireless channel models and measurements it has been shown  that the gain distribution for MPCs differs from the Gaussian assumed in \cite{Simeone2004,Karabulut2004,ChunJungWu2005,He2010,Haupt2010,Qi2011a,Berger2010,Meng2012,Taubock2008,Qi2011,Berger2010a,Prasad2014,Wang2016,Ling2016,Rao2014,Kokshoorn2015,Alkhateeb2015,Rodriguez2016,Schniter,Gao2016freq,Marzi2016a,Lee2016,Liu2016b,7953407,Mo2017,rodriguez2017frequency,Venugopal2017}. For instance in mmWave channels the sampling rate is higher than sub-6GHz systems, and individual MPCs arrive at different sampling intervals. A lognormal model for amplitudes, rather than Rayleigh, is prescribed in the mmWave  NYU Wireless \cite{Mathew2016} and 3GPP New Radio \cite{Specification2017} channel models. The physical motivation to assume a lognormal MPC amplitude distribution in other wireless channels is provided in \cite{Coulson1998,bristolmeasurement}.

The main contribution of our paper is analyzing the impact of non-Gaussian MPC gain distributions on different CS estimators. We show that the number of dominant MPCs needed for CS estimators to accurately estimate the channel decreases for MPC amplitude distributions with a high fourth moment, such as lognormal MPC amplitudes. We also show that the OMP algorithm performs well in channels with this characteristic. We illustrate our analysis with numerical results for the NYU mmWave MPC model \cite{Mathew2016}, which has lognormal MPC amplitudes.

Classic CS results are often non-Bayesian in the sense of assuming a fixed estimated-vector structure. For example in \cite[Theorems 6,7]{Duarte2011} the error is bounded if the vector is strictly sparse with a specific number of non-zeros. Nevertheless, the relation between CS estimator performance and random vector distributions is not yet fully understood. Recently \cite{Gribonval2012} established that the ``compressibility'' of a random i.i.d. vector distribution under Gaussian observation matrices depends on the second and fourth moments. Our analysis extends the observations of  \cite{Gribonval2012} to the case of sparse wireless channel CS estimation and studies how this problem is influenced by the distribution of the MPC amplitudes. Instead of considering the estimation of an i.i.d. vector with a Gaussian sensing matrix as in \cite{Gribonval2012}, our observation matrix is the Discrete Fourier Transform (DFT) of a sparse frequency-selective channel with arbitrary MPC delays. In CS the term \textit{superresolution} describes sensing matrices that enable vector reconstruction surpassing the Shannon-Nyquist sampling limits \cite{Duarte2011}. A low Mutual Incoherence (MI) and the Restricted Isometry Property (RIP) are defined in \cite{Duarte2011} as the sufficient conditions for classic CS error analyses. Unfortunately the DFT matrix with delay superresolution does not have these properties \cite{Berger2010a}. This is a difference with both Gaussian matrices \cite{Gribonval2012} 
and with the pilot matrices in frequency-flat massive MIMO CS channel estimation \cite{Kokshoorn2015,Alkhateeb2015,Rodriguez2016,Schniter,Gao2016freq,Marzi2016a,Lee2016,Liu2016b}. The first part of our analysis is a characterization of the MSE of the OMP algorithm without relying on low MI or the RIP properties. The second part of our analysis establishes the connection between the performance of \textbf{any} CS algorithm and a ``compressibility index'' that is related to the fourth-moment. 

We study the OMP algorithm in depth for our general channel model and show it outperforms a maximum-likelihood (ML) non-sparse estimator. Other CS algorithms such as Basis Pursuit De-Noising (BPDN) have shown good performance for channel estimation in some sparse channels, however OMP has been shown to outperform BPDN if the MPC amplitudes are a geometric decaying sequence \cite{Schnass2018}. Based on this we show that for high fourth moment MPC amplitude distributions OMP performs at least as well as BPDN, with much lower computational cost. Although we compare the performance of OMP and BPDN for a given channel model, the main goal of our paper is not to design the best CS channel estimator for a given channel model, but rather to provide a compressibility analysis applicable to \textbf{any} CS channel estimation algorithm under different MPC amplitude models. The interpretation
of compressibility in relation to the stop condition of greedy algorithms, including OMP, was earlier pointed out
in \cite{Gribonval2012}.

Our analysis indicates that the error of an OMP estimator grows linearly with the number of iterations. We define a decaying ``residual'' function $\rho(d)$ that measures the channel power that is not accounted for when OMP retrieves $d$ MPCs. The faster $\rho(d)$ decays, the fewer iterations OMP performs, and the lower the error. We characterize the ``compressibility'' of an arbitrary channel vector through 
an ``oracle estimator'' benchmark proposed in \cite{Gribonval2012}, with residual $\overline\rho(d)\leq\rho(d)$. This oracle lower bound holds for any CS algorithms and is not limited to OMP. Since the channel is not an i.i.d. vector, we study the decay-speed of $\overline\rho(d)$ substituting the fourth moment metric in \cite{Gribonval2012} by the Compressibility Index (CI) \cite{Jain1984}. The CI metric is a measure of the equality of elements in a set because it is inverse to the empirical fourth moment. Our results show that channel models with a high fourth moment in the MPC amplitude are much more ``compressible'' in the sense of having a lower CI (faster decay in $\overline{\rho}(d)$) than the Gaussian distributed MPC gain models in \cite{Simeone2004,Karabulut2004,ChunJungWu2005,He2010,Haupt2010,Qi2011a,Berger2010,Meng2012,Taubock2008,Qi2011,Berger2010a,Prasad2014,Wang2016,Ling2016,Rao2014,Kokshoorn2015,Alkhateeb2015,Rodriguez2016,Schniter,Gao2016freq,Marzi2016a,Lee2016,Liu2016b,Mo2017,7953407,rodriguez2017frequency,Venugopal2017}. We illustrate our results with numerical examples for the lognormal-amplitude mmWave channels models in \cite{Mathew2016,Specification2017}. In particular we show that in a Non-Line-of-Sight (NLOS) outdoor dense urban cellular mmWave link at $60$ m distance based on the NYU mmWave MPC model \cite{Mathew2016}, the MSE of OMP OFDM channel estimation is $3$ dB better than if the mmWave MPCs have Rayleigh distributed amplitudes as in \cite{rodriguez2017frequency,Mo2017}. We also show that BPDN is only $.86$ dB better than OMP in terms of MSE with a much higher computational complexity.

The remainder of this paper is structured as follows. Sec. \ref{sec:model} describes the system model. Sec. \ref{sec:LSML} defines two Maximum Likelihood (ML) estimator benchmarks. Sec. \ref{sec:CS} describes the OMP channel estimator. Sec. \ref{sec:OMPstop} contains the random vector compressibility analysis. Sec. \ref{sec:num} provides numerical results.
Sec. \ref{sec:receiver} shows that the estimation differences affect the performance of OFDM MMSE receivers.
Sec. \ref{sec:conclusions} concludes the paper.

\subsection{Notation}
Calligraphic letters denote sets. $|\mathcal{A}|$ denotes the cardinality of set $\mathcal{A}$. Bold uppercase letters denote matrices. $\|\A\|_n=\left(\sum_{i,j}|a_{i,j}|^n\right)^{\frac{1}{n}}$ is the $\ell_n$ Entrywise norm of $\A$. $\A=\B\cdot\Cb$ is the element-wise product. $\A^H$ is the Hermitian and $\A^{\dag}$ the Moore-Penrose pseudo-inverse $(\A^H\A)^{-1}\A^{H}$.
Bold lowercase letters denote vectors with $\|\ab\|_n$, $\ab=\bb\cdot\cc$, $\ab^{H}$, and $\ab^{\dag}$ as in a single-column matrix. When not specified, $n=2$ so $\|\ab\|=\|\ab\|_2=\sqrt{\ab^H\ab}$ is the vector length and $\|\A\|=\sqrt{\tr{\A^H\A}}$ the Frobenius norm. $\sim$ is the ``distributed as'' sign. $\mathcal{CN}(\mu,\sigma^2)$ is the Gaussian and $U(a,b)$ the uniform distribution.  We use the approximate inequality relation $a\gtrsim b$ to denote that ``$a$ is either greater or approximately equal to $b$'', that is, either $a>b$ or ($a<b$ and $a\simeq b$).

\section{System Model}
\label{sec:model}
\subsection{Multipath Time-Domain Channel Model}

We consider a time-invariant discrete-time equivalent channel (DEC) with Finite Impulse Response (FIR) length $M$ as in \cite{rodriguez2017frequency,7953407,Mo2017,Venugopal2017}. The channel is the sum of $L$ planar waves with fixed amplitudes $\{\alpha_{\ell}\}_{\ell=1}^{L}$, phases $\{\phi_{\ell}\}_{\ell=1}^{L}$ and delays $\{\tau_{\ell}\}_{\ell=1}^{L}$. The discrete-time conversion is modeled by a transmit pulse $p(t)$ and the sampling period $T$.
\begin{equation}
 \label{eq:chanM}
 h_M[n]=\sum_{\ell=1}^{L}\alpha_{\ell}e^{j\phi_{\ell}}p(nT-\tau_{\ell}),\quad n\in[0,M-1]
\end{equation}

The set of delays $\{\tau_{\ell}\}_{\ell=1}^{L}$ is ordered ($\tau_\ell>\tau_{\ell-1}$) and aligned to zero ($\tau_1=0$). The maximum delay spread is $D_s=\max \tau_{\ell}=\tau_L$. Typical choices of $p(t)$ have a peak at $t=0$ and weak infinite tails, so $M= \lceil D_s/T\rceil$ guarantees that all the MPCs are contained in the \textcolor{black}{FIR} DEC. 

We can rewrite \eqref{eq:chanM} as a vector. To do so, we first define a size-$M$ time-domain channel vector
$$\h_M\triangleq\left(h_M[0],h_M[1],h_M[2],\dots,h_M[M-1]\right)^T.$$
Second, we define the size-$M$ pulse-delay vector
$$\pp(\tau)\triangleq\left(p(-\tau),p(T-\tau),\dots,p((M-1)T-\tau)\right)^T$$
where we verified by simulation that for certain pulses $p(t)$ such as a sinc or Raised Cosine, $\|\pp(\tau)\|^2\simeq1$ and, if $\tau_\ell-\tau_{\ell-1}>T/2$, then $\pp(\tau_\ell)^H\pp(\tau_{\ell-1})\ll1$.

Third, we define the size-$M\times L$ pulse-delay matrix $\Pb_{\{\tau_\ell\}_{\ell=1}^{L}}\triangleq\left(\pp(\tau_1),\pp(\tau_2),\dots,\pp(\tau_L)\right)$
where $\|\Pb_{\{\tau_\ell\}_{\ell=1}^{L}}\|^2\simeq L$. If $L<M$ and $\tau_\ell\neq\tau_{\ell'}\;\forall \ell\neq\ell'$ then $\Pb_{\{\tau_\ell\}_{\ell=1}^{L}}$ is full column rank.

Finally, we define the size-$L$ MPC complex gain vector
$$\ab\triangleq\left(\alpha_1e^{j\phi_1},\alpha_2e^{j\phi_2},\alpha_3e^{j\phi_3},\dots,\alpha_Le^{j\phi_L}\right)^T.$$
Using these definitions, the sum in \eqref{eq:chanM} can be written as the following matrix-vector expression:
\begin{equation}
 \label{eq:chanPb}
 \h_M=\sum_{\ell=1}^{L}\pp(\tau_\ell)\alpha_\ell e^{j\phi_\ell}=\Pb_{\{\tau_\ell\}_{\ell=1}^{L}}\ab.
\end{equation}

We assume $L\leq M$. The number of MPCs $L$ and the sets $\{\alpha_{\ell}\}_{\ell=1}^{L}$, $\{\phi_{\ell}\}_{\ell=1}^{L}$ and $\{\tau_{\ell}\}_{\ell=1}^{L}$ are generated following explicit random distributions described below, and we apply \eqref{eq:chanPb} to obtain the channel. Due to this, $\h_M$ has a probability density function that is too cumbersome to write explicitly. If $p(t)$ has a peak at $t=0$ and weak tails, the larger the fourth moment in the distribution of $\{\alpha_{\ell}\}_{\ell=1}^{L}$, the more unevenly the energy is distributed among the coefficients of $\h_M$.

We assume that the carrier frequency is $f_c>10/T$. Compared to the wavelength $\lambda$ the delays satisfy $\tau_\ell-\tau_{\ell-1}\gg 1/f_c=\lambda/c$. Due to this the phases $\{\phi_{\ell}\}_{\ell=1}^{L}$ are independent $\sim U(0,2\pi)$.

The MPC delays $\{\tau_{\ell}\}_{\ell=1}^{L}$ follow a Poisson Arrival Process (PAP). 
In a Uniform PAP the inter-arrival gaps $\Upsilon_i=\tau_{i}-\tau_{i-1}$ follow a memoryless distribution. However, measurements shown that the MPC PAP in wireless systems has memory \cite{Turin1972}. Hence $\{\tau_{\ell}\}_{\ell=1}^{L}$ may also follow the more general Non-Uniform PAP (NUPAP) with MPCs grouped in ``time clusters'' \cite{Saleh1987}. 

Our main result studies differences in estimation MSE for different arbitrary distributions of $\{\alpha_{\ell}\}_{\ell=1}^{L}$. Most CS references assume some variant of $\alpha_{\ell}e^{j\phi_{\ell}}\sim\mathcal{CN}(0,1/L)$ \cite{Simeone2004,Karabulut2004,ChunJungWu2005,He2010,Haupt2010,Qi2011a,Berger2010,Meng2012,Taubock2008,Qi2011,Berger2010a,Prasad2014,Wang2016,Ling2016,Rao2014,Kokshoorn2015,Alkhateeb2015,Rodriguez2016,Schniter,Gao2016freq,Marzi2016a,Lee2016,Liu2016b,Mo2017,7953407,rodriguez2017frequency,Venugopal2017}, hence $\alpha_{\ell}$ would be Rayleigh. In \cite{Mathew2016,Specification2017}, instead, $\{\alpha_{\ell}\}_{\ell=1}^{L}$ follow a normalized lognormal distribution with a delay-dependent mean, where the unnormalized amplitudes $\overline{\alpha}_{\ell}$ satisfy $\log \overline{\alpha}_{\ell} = -\tau_\ell/\Gamma +\zeta_\ell$. Here $\Gamma$ is the mean received power decay with delay \cite{Saleh1987} and $\zeta_\ell\sim  \mathcal{N}(0,\sigma_\alpha^2)$ is a shadowing distribution that randomizes the amplitude \cite{Coulson1998}. 
The normalization $\alpha_\ell=\sqrt{\frac{P_{recv}}{\sum_{\ell=1}^{L} \overline{\alpha}_\ell^2}}\overline{\alpha}_\ell$ where $P_{recv}$ is the total received power is applied for consistency with macroscopic shadowing and pathloss. Since the delay NUPAP features clustering, this makes both the delays \textit{and the amplitudes} dependent across different MPCs. This is modeled in \cite{Mathew2016,Specification2017} with two normalized lognormals: one to divide $P_{recv}$ among the clusters and one to divide each cluster power among its MPCs.

A key difference between non-Gaussian and Gaussian sparse channel models is that in the latter typically $T\gg\tau_{\ell}-\tau_{\ell-1}$ when the $\ell$-th and $(\ell-1)$-th MPC belong to the same cluster. As several MPCs arrive during the same sampling intervals, the CLT is invoked to model the non-zero coefficients of $\h_M$ as Gaussian, avoiding the need to explicitly model the set $\{\alpha_{\ell}\}_{\ell=1}^{L}$.

\subsection{OFDM Channel Model and Pilot Scheme}

\begin{figure}
 \centering
 \includegraphics[width=.95\columnwidth]{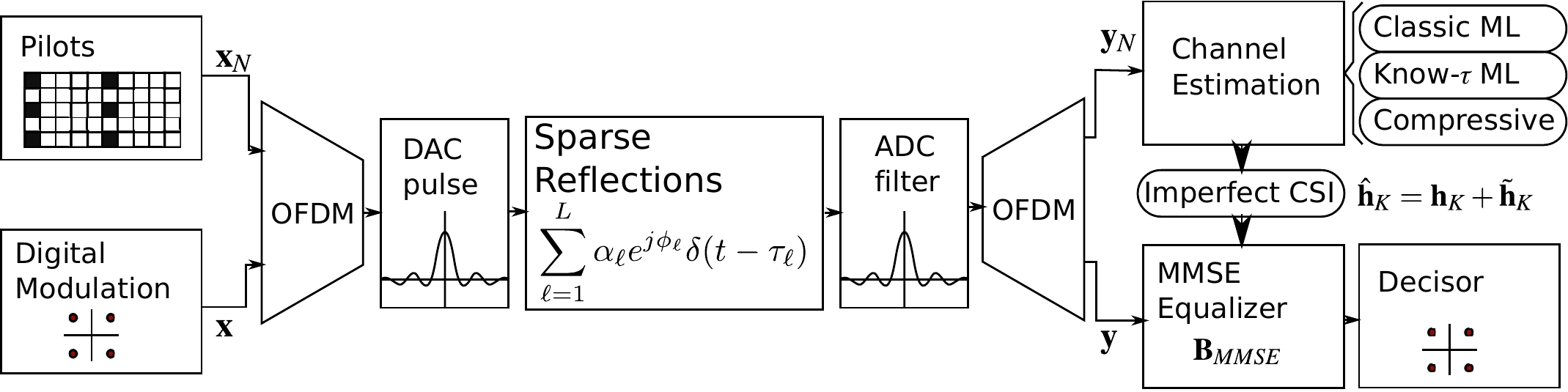}
 \caption{OFDM system with pilot subcarriers, a linear receiver with imperfect CSI, and a sparse multipath channel.}
 \label{fig:sysmod}
\end{figure}

We assume the OFDM system in Fig. \ref{fig:sysmod}. The number of subcarriers (DFT size) is $K\geq M$, the Cyclic Prefix (CP) length is $M$, and the OFDM signal is sent through the \textcolor{black}{FIR} DEC (1). At the receiver, the frequency-domain received signal is the $K$-coefficient vector
\begin{equation}
 \label{eq:chanOFDM}
 \y=\D(\x)\h_K+\z,
\end{equation}
where $\z$ is Additive White Gaussian Noise (AWGN) with variance $\sigma^2\I_K$, $\x$ are the $K$ IDFT inputs, $\D(\x)$ is the diagonal matrix containing the vector $\x$ in the main diagonal, and $\h_K$ is the size-$K$ DFT of \eqref{eq:chanPb}.
Ignoring the last $K-M$ columns of the DFT matrix, we can write this as
\begin{equation}
 \label{eq:chanK}
 \h_K=\F_{K,M} \h_M
\end{equation}
where the rectangular matrix $\F_{K,M}$ contains the first $M$ columns of the size-$K$ normalized DFT matrix that satisfies $\F_{K,M}^H\F_{K,M}=\I_M$ but $\F_{K,M}\F_{K,M}^H$ is not an identity matrix.

In practical scenarios the phases $\{\phi_{\ell}\}_{\ell=1}^{L}$ change faster than $\{\frac{\tau_{\ell}}{T}\}_{\ell=1}^{L}$ and $\{\alpha_{\ell}\}_{\ell=1}^{L}$. $\h_K$ is assumed time-invariant when $\{\phi_{\ell}\}_{\ell=1}^{L}$ does not change significantly during a block of several OFDM frames. In the first OFDM frame of each block, we assume a ``comb'' pilot pattern \cite{Negi1998} with $N$ pilot subcarriers is used, where $K\geq N\geq M$.
We denote the size-$N$ vector of transmitted pilots by $\x_N$ (a subset of the coefficients of  $\x$ where pilots are transmitted). Assuming $K/N\in\mathbb{N}$, we define the vector of received coefficients in pilot subcarriers during the first OFDM frame as
$$\y_N=\D(\x_N)\h_{N/K}+\z_N.$$
Denoting by $\F_{N/K,M}$ the submatrix that contains the \textit{first} $M$ columns and the \textit{alternated} $N$\textit{-out-of-}$K$ rows of the DFT we define the frequency-domain channel vector at the pilots as
\begin{equation}
 \label{eq:chanN}
 \h_{N/K}\triangleq\F_{N/K,M}\h_M
\end{equation}
where $\F_{N/K,M}^H\F_{N/K,M}=\frac{N}{K}\I_M$. The full channel $\h_K$ can be recovered by inverting \eqref{eq:chanN} and substituting in \eqref{eq:chanK} producing $\h_K=\F_{K,M}\h_M=\F_{K,M}(\F_{N/K,M})^{\dag} \h_{N/K}=\F_{K,M}\F_{N/K,M}^H\frac{K}{N}\h_{N/K}$.

Substituting \eqref{eq:chanPb} into \eqref{eq:chanK} we can write two alternative linear representations of $\h_K$ as follows.
\begin{equation}
 \label{eq:chanMPClinear}
 \underset{\textnormal{frequency}}{\underbrace{\h_K}}=\F_{K,M}\underset{\textnormal{discrete-time}}{\underbrace{\h_M}}=\F_{K,M}\Pb_{\{\tau_\ell\}_{\ell=1}^{L}}\underset{\textnormal{MPCs}}{\underbrace{\ab}},
\end{equation}
Using these two linear representations, non-sparse and sparse channel estimators can be written, respectively, as we will describe in more detail in the next section.

\section{LS/ML Estimation Benchmarks}
\label{sec:LSML}

\subsection{Conventional Discrete-time-domain LS/ML estimation}
\label{sec:LSMLconventional}
A non-sparse ML estimator of $\h_K$ is given in \cite{Morelli2001}, using $\h_K=\F_{K,M}\h_M$ but not requiring that $\h_M$ is sparse. This non-Bayesian estimator is the best we can do in non-sparse channels when the distribution of	 $\h_M$ is unknown or untractable, and we adopt it as the ``non-sparse benchmark''.

For $M\leq N \leq K$, the ML estimator of $\h_K$ subject to a linear constraint $\h_K=\F_{K,M}\h_M$ is a Least Squares (LS) estimator of $\h_M$ multiplied by $\F_{K,M}$ to reconstruct $\h_K$ \cite{Morelli2001}:

\begin{equation}
\label{eq:MLM}
\hat{\h}_M^{\textnormal{ML-}M}=(\D(\x_N)\F_{N/K,M})^\dag\y_N,
\end{equation}
\begin{equation}
\label{eq:MLMK}
\hat{\h}_K^{\textnormal{ML-}M}=\F_{K,M}\hat{\h}_M^{\textnormal{ML-}M},
\end{equation}
where the error can be expressed as
$$\tilde{\h}_K^{\textnormal{ML-}M}\triangleq\hat{\h}_K^{\textnormal{ML-}M}-\h_K=\F_{K,M}(\D(\x_N)\F_{N/K,M})^\dag\z_N.$$
Since $\z_N$ is AWGN the error is Gaussian, zero mean (unbiased) and the MSE is the variance
\begin{equation}
\label{eq:errMLMK}
\begin{split}
\nu_{\textnormal{ML-}M}^2&=\frac{\Ex{\z}{\|\F_{K,M}(\D(\x_N)\F_{N/K,M})^\dag\z_N\|^2}}{K}\\
&= \frac{\sigma^2 M}{N^2}\tr\{(\D(\x_N)^H\D(\x_N))^{-1}\}\\
&\geq\frac{M}{N}\sigma^2,
\end{split}
\end{equation}
where we solve ${\displaystyle \min_{\x_N:\|\x_N\|^2\leq N}} \tr\{(\D(\x_N)^H\D(\x_N))^{-1}\}$ to get the minimum MSE pilots, which have unit-amplitude coefficients $|x[k]|=1\;\forall k$ \cite{Negi1998}. Our simulations confirm this result (Fig. \ref{fig:mse128}). 

It is important to remark that although other non-sparse OFDM estimation methods are possible, this ML scheme is the ``best non-sparse estimator'' in our case. In \cite{Negi1998} a MMSE channel estimator is derived, but it requires the assumption that $\h_M$ is Gaussian distributed. And in \cite{Coleri2002} direct LS estimation of $\h_N$ is performed, $\hat{\h}_N=\D(\x_N)^{-1}\y_N$, followed by a frequency-domain interpolation of $\h_K$ from $\h_N$. Although this LS estimator reduces complexity, it is not the ML estimator and it has $\frac{N}{M}$ times more error variance than our benchmark.

Some receivers need to know the covariance matrix of the error. Since $\z_N$ is Gaussian and $M\leq N\leq K$ the error vector $\tilde{\h}_K^{\textnormal{ML-}K}$ is Gaussian and has a rank-M covariance matrix 
\begin{equation}
\label{eq:covMLMK}
\Sb_{\tilde{\h}_K}=\nu_{\textnormal{ML-}M}^2\F_{K,M}\F_{K,M}^H.
\end{equation}

We observe that the coefficients of this matrix can be written as
\begin{equation}
\label{eq:covSinc}
\Sigma_{\tilde{\h}_K}^{i,j}=\nu_{\textnormal{ML-}M}^2\frac{\sin(2\pi \frac{M(i-j)}{K})}{\sin(2\pi \frac{(i-j)}{K})}e^{j2\pi \frac{(M-1)(i-j)}{K})}.
\end{equation}
This expression stems from the time-domain interpolation \eqref{eq:MLMK}, which in frequency domain can be represented by the expression $\F_{K,M}\F_{M,M}^H$ that behaves as a periodical-sinc interpolation filter.

\subsection{Sparse MPC-domain LS/ML estimation}
\label{sec:LSMLMPC}

We define our second benchmark assuming the channel is sparse and $\{\tau_\ell\}_{\ell=1}^{L}$ is perfectly known to the receiver (``genie-aided''). 
With minor changes to Sec. \ref{sec:LSMLconventional} we can derive a LS estimator of the vector $\ab$ and use $\F_{K,M}\Pb_{\{\tau_\ell\}_{\ell=1}^{L}}$ to recover the frequency-domain channel as
\begin{equation}
\label{eq:MLa}
\hat{\ab}^{\textnormal{ML-}{\{\tau_\ell\}_{\ell=1}^{L}}}= (\D(\x_N)\F_{N/K,M}\Pb_{\{\tau_\ell\}_{\ell=1}^{L}})^{\dag}\y_N
\end{equation}
\begin{equation}
\label{eq:MLaK}
\hat{\h}_K^{\textnormal{ML-}{\{\tau_\ell\}_{\ell=1}^{L}}}=\F_{K,M}\Pb_{\{\tau_\ell\}_{\ell=1}^{L}}\hat{\ab}^{\textnormal{ML-}{\{\tau_\ell\}_{\ell=1}^{L}}}
\end{equation}

For AWGN $\z_N$ the error is Gaussian, zero mean (unbiased) and the MSE is the variance
\begin{equation}
\begin{split}
\label{eq:errMLaK}
\nu_{\textnormal{ML-}{\{\tau_\ell\}_{\ell=1}^{L}}}^2
&\geq \frac{L}{N}\sigma^2,\\
\end{split}
\end{equation}
where again the equality is achieved for $|x[k]|=1\;\forall k$. Noting that $L<M$ the genie-aided benchmark offers a gain versus the non-sparse benchmark of $M/L$. For example choosing $M=N$ for minimal pilot overhead and the values $L=30$ and $M=128$ results in a gain of $6$ dB. We verified this result in numerical simulations (Fig. \ref{fig:mse128}).

For the error convariance matrix we get
\begin{equation}
\label{eq:covMLaK}
\Sb_{\tilde{\h}_K}=\frac{\sigma^2}{N}\F_{K,M}\Pb_{\{\tau_\ell\}_{\ell=1}^{L}}\Pb_{\{\tau_\ell\}_{\ell=1}^{L}}^\dag\F_{K,M}^H,
\end{equation}
where we observe the DFT $\F_{K,M}$ on both sides, which as we argued can be interpreted as a M-to-K periodical sinc interpolation filter \eqref{eq:covSinc}. However, in-between $\F_{K,M}$ and $\F_{K,M}^H$ we have the $M\times M$ rank-$L$ matrix $\Pb_{\{\tau_\ell\}_{\ell=1}^{L}}\Pb_{\{\tau_\ell\}_{\ell=1}^{L}}^\dag$ instead of $\I_M$. This ``inner'' term makes the error covariance matrix rank $L$ instead of $M$. And the zeros of \eqref{eq:covSinc} do not necessarily correspond with zeros in \eqref{eq:covMLaK}. Furthermore, the covariance matrix depends on $\{\tau_\ell\}_{\ell=1}^L$, so the error covariance matrix varies for different realizations of the sparse channel.

\section{Hybrid CS/LS estimator of Sparse MPCs}
\label{sec:CS}

In this section we assume that the pilot sequence $\x_N$ has unit-amplitude symbols as in the previous section. Without loss of generality we describe the estimator for $\D(\x_N)=\I_N$ where the results remain valid if we first compute $\y_N'=\D(\x_N)^H\y_N$ for other pilot sequences.

In practice the delays $\{\tau_\ell\}_{\ell=1}^{L}$ are not known to the receiver and it cannot use $\Pb_{\{\tau_\ell\}_{\ell=1}^{L}}$ to implement the sparse LS estimator \eqref{eq:MLa}, \eqref{eq:MLaK}. To design a practical estimator, we assume the channel is sparse in the sense that a linear relation $\h_K=\F_{K,M}\Pb_{\{\tau_\ell\}_{\ell=1}^{L}}\ab$ exists but its matrix is unknown, and we define a hybrid two-step estimator as follows:
\begin{enumerate}
 \item We use any CS method to estimate the delays of the channel, denoted as $\hat{\Tau}\simeq \{\tau_\ell\}_{\ell=1}^{L}$.
 \item We use MPC-domain LS estimation \eqref{eq:MLa} based on the matrix $\Pb_{\hat{\Tau}}$ instead of $\Pb_{\{\tau_\ell\}_{\ell=1}^{L}}$.
\end{enumerate}

We define two entities of interest for this procedure, which remain valid for any CS method.
\begin{definition}
 We define the \textbf{residual} associated with an approximation of the delay set $\hat{\Tau}$ as
 \begin{equation}
\begin{split}
\rho(\hat{\Tau})&\triangleq \frac{\|((\I_K)-\F_{K,M}\Pb_{\hat{\Tau}}\Pb_{\hat{\Tau}}^{\dag}\F_{K,M}^H)\h_K\|^2}{K}\\
&=\frac{\|(\I_M-\Pb_{\hat{\Tau}}\Pb_{\hat{\Tau}}^\dag)\h_M\|^2}{K}.
\end{split}
\end{equation}
\end{definition}

\begin{definition}
\label{def:subsp}
We define the \textbf{subspace} associated with an approximation of the delay set $\hat{\Tau}$ as
  \begin{equation}
\begin{split}
&\mathcal{S}(\F_{K,M}\Pb_{\hat{\Tau}})\triangleq\\
&\quad\left\{\vv\in \mathbb{C}^K: \|(\I_K-\F_{K,M}\Pb_{\hat{\Tau}}\Pb_{\hat{\Tau}}^{\dag}\F_{K,M}^H)\vv\|^2=0\right\}.
\end{split}
\end{equation}
\end{definition}\

For any algorithm adopted in the first step, the error can be expressed as the projection of the noise over the subspace $\mathcal{S}(\F_{K,M}\Pb_{\hat{\Tau}})$ plus $K\rho(\hat{\Tau})$. To show this, we define the projection of $\h_K$ on $\mathcal{S}(\F_{K,M}\Pb_{\hat{\Tau}})$ as $\h_S=\F_{K,M}\Pb_{\hat{\Tau}}(\F_{K,M}\Pb_{\hat{\Tau}})^\dag\h_K$. Since $\h_S\in\mathcal{S}(\F_{K,M}\Pb_{\hat{\Tau}})$, there exists some vector $\bb\in\mathbb{C}^{|\hat{\Tau}|}$ such that $\h_S=\textcolor{black}{\F_{K,M}}\Pb_{\hat{\Tau}}\bb$. Step 2 uses \eqref{eq:MLa} to estimate $\bb$ and $\h_S$, producing $\hat{\bb}=(\D(\x_N)\F_{N/K,M}\Pb_{\hat{\Tau}})^{\dag}\y_N$ and $\hat{\h}_S=\textcolor{black}{\F_{K,M}}\Pb_{\hat{\Tau}}\hat{\bb}$. Since $\h_S$ is a projection of $\h_K$, we adopt $\hat{\h}_S$ as estimator of $\h_K$. We can write the error as $\tilde{\h}_K=\h_K-\hat{\h}_S=(\h_K-\h_S)+(\h_S-\hat{\h}_S)=\h_{E}+\tilde{\h}_S.$ Here, $\h_{E}$ is the error of the first step satisfying $\|\h_{E}\|^2=K\rho(\hat\Tau)$ and $\tilde{\h}_S=\h_S-\hat{\h}_S=\F_{K,M}\Pb_{\hat{\Tau}}(\bb-\hat{\bb})$ is the error of the second step. Definition \ref{def:subsp} means that $\h_E$ is orthogonal to $\mathcal{S}(\F_{K,M}\Pb_{\hat{\Tau}})$, and particularly to $\tilde{\h}_S$, and therefore $\|\tilde{\h}_K\|^2=K\rho(\hat{\Tau})+\|\F_{K,M}\Pb_{\hat{\Tau}}(\bb-\hat{\bb})\|^2$.

\subsection{Estimation of $\{\tau_\ell\}_{\ell=1}^{L}$ as a CS Problem}

In the paragraphs above we described the first step abstractly as ``obtain some estimate of the delays $\hat{\Tau}$''. In this section we discuss how this task can be framed as a CS problem. The \textit{delay dictionary set} with size $N_T\geq M$ is defined as $\Tau_{N_T}\triangleq\{n\frac{D_s}{N_T}:n\in[0,N_T-1]\}$. We wish to find a subset $\hat{\Tau}\subset\Tau_{N_T}$ with $\hat{L}=|\hat{\Tau|}$ such that 1) the subspaces $\mathcal{S}(\F_{K,M}\Pb_{\{\tau_\ell\}_{\ell=1}^{L}})$ and $\mathcal{S}(\F_{K,M}\Pb_{\hat{\Tau}})$ are similar in the sense that $\rho(\hat{\Tau})$ is small, and 2) the matrix $\Pb_{\hat{\Tau}}$ is 
a submatrix formed by $\hat{L}$ of the columns of the \textit{delay dictionary matrix} $\Pb_{\Tau_{N_T}}$, with $\hat{L}\ll M$, so that the pseudoinverse of $\Pb_{\hat{\Tau}}$ exists and $\|\tilde{\h}_S\|^2\propto \frac{\hat{L}}{N}$ is small. Ideally, for this we would solve
\begin{equation}
\label{eq:besterror}
\min_{\hat{\Tau}\subset\Tau_{N_T}}\min_{\bb}\Ex{\z}{\|\h_E\|^2+\|\tilde{\h}_S\|^2},
\end{equation}
where increasing $|\hat{\Tau}|$ increases $\|\tilde{\h}_S\|^2$ and decreases $\|\h_E\|^2$. However, $\|\h_E\|^2$ cannot be evaluated without knowing $\h_M$. Instead we consider an approximate error minimization. First we define
\begin{equation}
\begin{split}
\h_{N/K}&=\F_{N/K,M}\Pb_{\hat{\Tau}}\bb\\
&=\F_{N/K,M}\Pb_{\Tau_{N_T}}\bb_{N_T}\\
&=\Phib_{N_T}\bb_{N_T}
\end{split}
\end{equation}
where $\bb$ is a non-sparse size-$\hat{L}$ vector and $\bb_{N_T}$ is the sparse size-$N_T$ vector with the coefficients of $\bb$ in the appropriate places and zeros elsewhere. If we define the matrix $\Phib_{N_T}=\F_{N/K,M}\Pb_{\Tau_{N_T}}$, then identifying the non-zero coefficients of the sparse vector $\bb_{N_T}$ is a classic CS problem.

As an approximation to \eqref{eq:besterror}, we can use the $\ell_0$ minimization
\begin{equation}
\label{eq:bestsparse}
 \min \|\hat{\bb}_{N_T}\|_0 \textnormal{ s.t. } \|\y_N-\Phib_{N_T}\hat{\bb}_{N_T}\|_2^2\leq \xi,
\end{equation}
so that $|\hat{\Tau}|$ is minimized and $\rho(\hat{\Tau})$ is controlled by $\xi$. 
Intuitively, relaxing $\xi$ allows more reduction of $|\hat{\Tau}|$ but increases $\rho(\hat{\Tau})$. The value $\xi=N\sigma^2$ allows $\|\y_N-\Phib_{N_T}\hat{\bb}_{N_T}\|_2^2$ to be of similar power to a size-$N$ noise vector, and brings $\rho(\hat{\Tau})$ below some notion of a noise floor.

By definition when $N_T=M$ the delay dictionary contains the sampling instants $\Tau_{M}=\{0,T,2T\dots (M-1)T\}$ and is a \textit{complete} dictionary without superresolution. On the other hand when $N_T>M$, $\Tau_{N_T}$ is an \textit{overcomplete} dictionary with superresolution. Unfortunately in the general case the matrix $\Phib_{N_T}$ would not meet sufficient conditions to guarantee that the representation of $\h_{N/K}$ as $\Phib_{N_T}\bb_{N_T}$ is unique \cite{Duarte2011}. Thus in CS OFDM channel estimation we cannot have simultaneously superresolution and invoke \cite[Theorems 6,7]{Duarte2011} to guarantee that the indices of the large coefficients of $\bb_{N_T}$ are recovered correctly. Nonetheless, in channel estimation we do not care if $\hat{\Tau}$ contains false-positive errors as long as $\|\tilde{\h}_K\|^2$ is minimized.

By definition $\tau_\ell$ can take any real non-negative value, and therefore any finite dictionary ($N_T<\infty$) would incur some delay discretization error. It is common to disregard this error in CS channel estimator designs \cite{7953407,rodriguez2017frequency,Venugopal2017}, but in this paper we consider the extension to the continuous case, enabling an effectively infinite dictionary $N_T=\infty$. The continuous dictionary introduces no discretization errors and, as far as the dictionary design is concerned, it would be possible to recover $\ab$ and $\Pb_{\{\tau_\ell\}_{1}^{L}}$ exactly. Rather than the finite matrix product $\Phib_{N_T}\bb_{N_T}$, the continuous case considers the integral $\int_{0}^{\infty}\boldsymbol{\phi}_{\infty}(\tau)b_{\infty}(\tau)d\tau=\Pb_{\{\tau_\ell\}_{\ell=1}^{L}}\ab$ where $\boldsymbol{\phi}_{\infty}(\tau)=\F_{N/K,M}\pp(\tau)$ and $b_{\infty}(\tau)=\sum_{\ell=1}^{L}\alpha_\ell e^{j\phi_\ell}\delta(\tau-\tau_\ell)$ are the equivalents of the discrete dictionary $\Phib_{N_T}$ and the sparse vector $\bb_{N_T}$, respectively, and estimation is performed over the function-space with bases $\{\boldsymbol{\phi}_{\infty}(\tau):\tau\in[0,D_s]\}$ and where the integral is the distance.

\subsection{OMP Heuristic to solve \eqref{eq:bestsparse}}

The $\ell_0$ problem \eqref{eq:bestsparse} has combinatorial complexity in the number of columns of $\Phib_{N_T}$. There are many CS algorithms that can be roughly divided into two branches to address this: substitution by an $\ell_1$ problem (BPDN, LASSO, Dantzig-Selector), and greedy approximations of the combinatorial problem (OMP, CoSaMP) \eqref{eq:bestsparse} \cite{Duarte2011}. 
Results in \cite{Schnass2018} have established that BPDN requires a weaker sufficient condition than OMP to recover the sparse support correctly when the non-zero coefficients have similar magnitude. On the other hand OMP requires a weaker condition than BDPN when the non-zero coefficient magnitudes display a geometric decay. Wireless channels with a high fourth moment MPC gain distribution typically generate a fast-decaying collection of MPC amplitudes similar to this second scenario. CoSaMP is faster than OMP because it selects $L$ elements per iteration, but this requires knowing $L$ beforehand, which is random and unknown in our model. LASSO has a similar limitation as it needs to adopt an upper bound of $\|\h_{N/K}\|_1$ as one of the optimization constraints. For these reasons we select OMP for solving the CS approximation problem \eqref{eq:bestsparse}.

In addition, the error analysis of OMP sheds light on the theoretical connection between CS and the ``compressibility'' analysis we give in Section \ref{sec:OMPstop}. OMP is a greedy algorithm that, on each iteration, adds one column to the matrix minimizing the LS projection of $\y_N$. 
The interpretation of compressibility in relation to the stop condition of greedy algorithms was earlier pointed out in \cite{Gribonval2012}. Interpreting OMP as a direct heuristic for \eqref{eq:besterror} instead of \eqref{eq:bestsparse}, each greedy iteration follows the steepest decrease of $\|\h_E\|^2$ and increases $\|\tilde{\h}_S\|^2$, and OMP stops ($\xi$) when the next decrement of $\|\h_E\|^2$ is not worth its associated increase of $\|\tilde{\h}_S\|^2$. 
Intuitively, the choice $\xi=N\sigma^2$ means that the residual $\|\y_{N/K}-\Phib_{N_T}\hat{\bb}_{N_T}\|^2<\xi$ contains only noise and under-the-noise-floor channel coefficients, and additional steps do not lead to a further reduction of the MSE.

 \begin{algorithm}[!t]
\caption{Orthogonal Matching Pursuit with Binary-search Refinement (OMPBR)}
\label{alg:OMP}
\begin{algorithmic}[1]
  \STATE Def. dictionary ${ \Tau_{N_T}=\{n\frac{D_s}{N_T}\}\;\forall n\in\{0\dots N_T-1\}}$
  \STATE Generate $N/K$-FFTs  $\vphi_n=\F_{N/K,M}\pp(\hat{\tau}_n)$
  \STATE Initialize estimate of sparse support set $\hat{\Tau}_0=\emptyset$
  \STATE Initialize residual with data observation $\rr_0=\y_{N/K}$
  \WHILE{$\|\rr_i\|^2>\xi$ and $i<$ max num. iterations}    
    \STATE $\overline{\tau}_i=\{\arg\max |\vphi_n^H\rr_{i-1}|\;\forall \Tau_{N_T}\setminus\mathcal{T}_{i-1}\}$
    \STATE $\mu^*={\displaystyle \arg\max_{\mu \in[\frac{-1}{2},\frac{1}{2}]}} |\pp(\overline{\tau}_i+\mu \frac{D_s}{N_T})^H\F_{N/K,M}^H\rr_{i-1}|$
    \STATE $\hat{\tau}_i=\overline{\tau}_i+\mu^* \frac{D_s}{N_T}$
    \STATE Update estimation of support $\hat{\Tau}_{i}=\hat{\Tau}_{i-1}\cup \{\hat{\tau_i}\}$
    \STATE Update support matrix $\hat{\Phib}_i=\F_{N/K,M}\Pb_{\hat{\Tau}_{i}}$
    \STATE Update LS channel estimator  $\hat{\h}_i=\hat{\Phib}_i\hat{\Phib}_i^\dag\y_{N/K}$
    \STATE Update residual for next step $\rr_{i}=\y_{N/K}-\hat{\h}_i$, 
  \ENDWHILE
\end{algorithmic}
\end{algorithm}

Our main contribution consists in analyzing CS estimators for non-Gaussian MPC gain distributions. To analyze this result when there is no delay discretization error we extend OMP to OMPBR which has a continuous delay dictionary in Algorithm \ref{alg:OMP}, while setting $\mu^*=0$ in line 8 converts the algorithm to ``classic'' OMP.
Lines 6-8 of Alg. \ref{alg:OMP} solve the subproblem
 \begin{equation}
 \label{eq:subprbolembisection}
  \max_{\tau} |\pp(\tau)^H\F_{N/K,M}^H\rr_{i-1}|,
 \end{equation}
 which is non-concave in the interval $[0,D_s]$. Since typical pulses $p(t)$ such as the Raised Cosine are symmetric at $t=0$ and concave in the interval $-T/2,T/2$, we make the assumption that this problem is locally concave and symmetric around a local maximum in small regions we call \textit{delay bins}. First a finite dictionary is used to identify the best bin as in OMP, centered at $\overline{\tau}_i$ with width $\pm\frac{1}{2}\frac{D_s}{N_T}$. Then the delay is refined as $\hat{\tau}_i=\overline{\tau}_i+\mu^* \frac{D_s}{N_T}$ where $\mu^*$ comes from an assumed locally concave and symmetric maximization. We use the Binary-search Local Maximum Refinement described in Alg. \ref{alg:binary}, 
rather than a gradient as \cite{Marzi2016a}, to guarantee that OMPBR is robust in the sense that the result is contained in the bin and never worse than the decision that OMP would make. Lemma \ref{lem:refinement} shows Alg. \ref{alg:binary} is optimal if the target function satisfies the assumptions. 

\begin{algorithm}[!h]
\caption{Binary-search Local Maximum Refinement}
\label{alg:binary}
\begin{algorithmic}[1]
  \STATE Assume a function $f(\mu)$, and initial interval $[\mu_{\min},\mu_{\max}]$, and desired relative resolution $\delta_\mu$
  \STATE Initialize $\mu_0=\mu_{\min}$,  $\mu_1=\mu_{\max}$
  \WHILE{$|\mu_1-\mu_0|>$ $2\delta_\mu$}
    \IF{$f(\mu_0)<f(\mu_1)$}
      \STATE $\mu_0=\frac{\mu_0+\mu_1}{2}$
    \ELSE
      \STATE $\mu_1=\frac{\mu_0+\mu_1}{2}$
    \ENDIF    
  \ENDWHILE
  \STATE $\hat{\mu}=\frac{\mu_0+\mu_1}{2}$
\end{algorithmic}
\end{algorithm}

\begin{lemma}
\label{lem:refinement}
 For any function $f(\mu)$ and interval $[\mu_{\min},\mu_{\max}]$ such that 
 \begin{enumerate}
  \item $f(\mu)$ has a single local maximum in the interval, 
located at $\mu^*\triangleq{\displaystyle \arg\max_{\mu\in[\mu_{\min},\mu_{\max}]}}f(\mu)$,
  \item $f(\mu)$ is strictly increasing for $\mu\leq\mu^*$ 
and decreasing for $\mu\geq\mu^*$
  \item $f(\mu)$ is symmetric with regard to $\mu^*$, that is $f(\mu^*+\Delta\mu)=f(\mu^*-\Delta\mu)\;\forall \Delta\mu\in[0,\frac{\mu_{\max}-\mu_{\min}}{2}]$
 \end{enumerate}
 Algorithm \ref{alg:binary} finds a solution $\hat{\mu}$ such that $|\hat{\mu}-\mu^*|<\delta_\mu$.
\end{lemma}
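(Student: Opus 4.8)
The plan is to show that Algorithm \ref{alg:binary} maintains the invariant $\mu^*\in[\mu_0,\mu_1]$ while halving the interval width on every iteration, so that the returned midpoint necessarily lies within half of the final width of $\mu^*$.

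First I would establish the single observation that drives everything: the comparison of function values in line 4 encodes the position of $\mu^*$. By assumptions 2 and 3, $f$ is a function of the distance $|\mu-\mu^*|$ alone and is strictly decreasing in that distance. Hence for any two points $a,b$ in the interval, $f(a)<f(b)$ holds if and only if $|a-\mu^*|>|b-\mu^*|$. This equivalence is the crux of the argument, since it converts an inequality between $f$-values, which is all the algorithm can test, into a statement about which endpoint is closer to the maximizer.

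Next I would prove the loop invariant $\mu^*\in[\mu_0,\mu_1]$ by induction on the iteration count. The base case holds by assumption 1, as $\mu^*\in[\mu_{\min},\mu_{\max}]$. For the inductive step, write $m=\tfrac{\mu_0+\mu_1}{2}$ for the midpoint and use $\mu^*\in[\mu_0,\mu_1]$ to drop the absolute values. The key observation then gives that $f(\mu_0)<f(\mu_1)$ is equivalent to $\mu^*-\mu_0>\mu_1-\mu^*$, i.e.\ $\mu^*>m$; in this branch line 5 sets $\mu_0\leftarrow m$, so $\mu^*\in[m,\mu_1]$. Symmetrically, $f(\mu_0)\ge f(\mu_1)$ gives $\mu^*\le m$, and line 7 sets $\mu_1\leftarrow m$, so $\mu^*\in[\mu_0,m]$. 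In either branch the invariant is preserved and the interval width is exactly halved.

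Finally, since $|\mu_1-\mu_0|$ is halved each pass, the loop terminates, and the guard in line 3 forces $|\mu_1-\mu_0|\le 2\delta_\mu$ on exit. Because $\mu^*$ lies in $[\mu_0,\mu_1]$ and $\hat\mu=\tfrac{\mu_0+\mu_1}{2}$ is the midpoint of that interval, $|\hat\mu-\mu^*|\le\tfrac12|\mu_1-\mu_0|\le\delta_\mu$, which gives the stated accuracy. I expect the only delicate step to be the key observation, where symmetry and strict unimodality must be combined correctly to place $\mu^*$ on the same side of $m$ as the larger-valued endpoint; the remaining binary-search bookkeeping is routine. The one case worth a remark is the tie $f(\mu_0)=f(\mu_1)$, in which $\mu^*=m$ exactly and sits at the updated endpoint, and where the accuracy bound is met with equality rather than strictly.
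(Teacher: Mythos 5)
Your proof is correct and follows essentially the same argument as the paper's: the comparison $f(\mu_0)<f(\mu_1)$ localizes $\mu^*$ (via symmetry and strict monotonicity) in the half-interval nearer the larger endpoint, the invariant $\mu^*\in[\mu_0,\mu_1]$ is preserved under halving, and the stop condition yields $|\hat{\mu}-\mu^*|\leq|\mu_1-\mu_0|/2\leq\delta_\mu$. Your explicit induction and your remark on the tie case $f(\mu_0)=f(\mu_1)$ are slightly more careful than the paper's one-line treatment, but the underlying idea is identical.
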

\begin{proof}
 Due to the symmetry of the strictly increasing function  $f(\mu)$, if $f(\mu_0)<f(\mu_1)$ then $|\mu^*-\mu_0|>|\mu^*-\mu_1|$. Therefore $\mu^*$ is contained in the interval $[\frac{\mu_0+\mu_1}{2},\mu_1]$. The converse holds if $f(\mu_0)>f(\mu_1)$. The algorithm halves the interval $[\mu_0,\mu_1]$ in each iteration, and the interval is guaranteed to contain $\mu^*$. The stop condition guarantees $|\hat{\mu}-\mu^*|<|\mu_0-\mu_1|/2<\delta_\mu$.
\end{proof}

For OMP, line 6 of Alg \ref{alg:OMP} performs $O(N_TM)$ products and the delay dictionary has resolution $D_s/N_T$, whereas for OMPBR, lines 6-8 perform $O((N_T+2\log(1/\delta_\mu))M)$ products where $\delta_\mu<1$ adjusts the desired resolution $\delta_\mu D_s/N_T$. Thus, if we desire to improve the delay resolution, the computational cost increase is linear for OMP and logarithmic for OMPBR.
 
In summary we distinguish three cases of delay dictionary sizes we consider:
\begin{itemize}
 \item $N_T=M$ and $\mu^*=0$: in this case the dictionary is finite and without superresolution as in \cite{Taubock2008,Berger2010a,Qi2011a,7953407}. For this scenario if the columns of $\Phib_M$ are orthogonal, for example if $p(t)$ is a Nyquist pulse, then the greedy OMP solves the $\ell_0$ problem \eqref{eq:bestsparse} \textbf{exactly}.
 \item Finite $N_T> M$ and $\mu^*=0$: this case produces a finite \textit{overcomplete} dictionary with superresolution as in \cite{Berger2010a,Venugopal2017}. In this scenario, OMP is an heuristic of \eqref{eq:bestsparse}, so its error is lower bounded by the $\ell_0$ optimum for the same $N_T>M$. Even if we cannot invoke \cite[Theorems 6,7]{Duarte2011},  when $N_T/M$ is an integer, $\Tau_M\subset \Tau_{N_T}$. And since OMP is a greedy algorithm, using a dictionary size $N_T$, such that $N_T/M\in\mathbb{N}$, achieves lower or equal MSE to that of OMP with size $M$, which is also the $\ell_0$ optimum with $N_T=M$. However, increasing $N_T$ results in a linear growth of computational complexity.
 \item Finite $N_T\geq M$ with $\mu^*$ as in Alg. \ref{alg:OMP}: this case is OMPBR. For very small $\delta_\mu$, OMPBR has an \textit{effectively} continuous dictionary with \textit{effectively} infinite size, while complexity only grows with $\log(\delta_\mu^{-1})$. In addition, in each iteration of OMPBR the greedy decision is never worse than in OMP using the same $N_T$. 
\end{itemize}

\subsection{OMP Error Analysis}
\label{sec:error}
We assume OMP stops after $\hat{L}$ iterations and returns $\Pb_{\hat{\Tau}}\in \mathbb{C}^{M\times\hat{L}}$. Both $\Pb_{\hat{\Tau}_{\hat{L}}}$ and $\hat{L}$ are random variables that depend on $\z_N$. 
The MPC phase distribution is $U(0,2\pi$), and even if the error for a fixed $\h_M$ is not unbiased, the average of the error over the distribution of the channel is zero-mean. For compactness, we denote the residual $\rho(\hat{L})=\rho(\hat{\Tau}_{\hat{L}})$. The MSE is
\begin{equation}
\label{eq:omperror}
\begin{split}
\nu_{OMP}^2
&=\frac{\Ex{\z}{\|\tilde{\h}_S\|^2+\|\h_E\|^2}}{K}
\\&
=\frac{\Ex{\z}{\|\Pb_{\hat{\Tau}_{\hat{L}}}(\hat{\bb}-\bb)\|^2}}{K}+\Ex{\z}{\rho(\hat{L})}\\
\end{split}
\end{equation}
which follows from $\F_{K,M}^H\F_{K,M}=\I_M$ and the definitions of $\rho(\hat{L})$, $\tilde{\h}_S$ and $\h_E$.

The following results are proven in Appendices  \ref{app:firsterm} and  \ref{app:secondterm} and characterize the MSE of OMP:
\begin{enumerate}
 \item 
 In OMP, the first MSE term approaches $\frac{\Ex{}{\hat{L}}}{N}\sigma^2$ when $\sigma^2$ is small, 
 $$\lim_{\sigma^2\to0}\frac{\Ex{\z}{\|\Pb_{\hat{\Tau}_{\hat{L}}}(\hat{\bb}-\bb)\|^2}}{K}-\frac{\Ex{}{\hat{L}}}{N}\sigma^2=0.$$
\item
 When $p(t)$ is a Nyquist pulse and $N_T=M$, in OMP the first MSE term satisfies
 $$\frac{\Ex{\z}{\|\Pb_{\hat{\Tau}_{\hat{L}}}(\hat{\bb}-\bb)\|^2}}{K}\geq\frac{\Ex{}{\hat{L}}}{N}\sigma^2.$$
\item
 If we choose $\xi=N\sigma^2$ the first and the second error terms are approximately equal.
 $$\frac{\Ex{\z}{\|\tilde{\h}_S\|^2}}{K}=\frac{\Ex{\z}{\|\Pb_{\hat{\Tau}_{\hat{L}}}(\hat{\bb}-\bb)\|^2}}{K}\simeq\Ex{\z}{\rho(\hat{L})}$$
\end{enumerate}

Putting everything together we get that the MSE can be approximated as
\begin{equation}
 \label{eq:errOMP}
 \begin{split}
 \nu_{OMP}^2&\simeq\frac{2\Ex{\z}{\|\tilde{\h}_S\|^2}}{N}\\
 &\simeq\frac{2\Ex{\z}{\rho(\hat{L})}}{N}\\
 &\stackrel{\sigma^2\ll1}{\xrightarrow{\hspace{.7cm}}}\frac{2\Ex{\z}{\hat{L}}}{N}\sigma^2.
 \end{split}
\end{equation}
where the asymptotic result is a tight lower bound if $M=N_T$ and $p(t)$ is a Nyquist pulse.

Our simulations verify that $\hat{L}$ grows with SNR and that $\nu_{OMP}^2$ converges to $\frac{2\Ex{\z}{\hat{L}}}{N}\sigma^2$ as the SNR increases (Figs. \ref{fig:ntaps128}, \ref{fig:mse128}). We also conjecture that this is a lower bound in the general case. Two arguments support this conjecture: First, 
if we let $\sigma^2\to\infty$ then $\y_N\simeq\z_N$ and OMP becomes a greedy algorithm that selects the noise dimensions in decreasing order of power, picking up above-average noise power. Second, the simulation results displayed such behavior.

The analysis shows that when OMP stops, the number of recovered MPCs $\hat{L}$ is determined by the meeting point between a term that decays as $\rho(\hat{L})$ and a term that grows linearly with $\hat{L}\sigma^2$.
We can interpret $\rho(d)$ as a random non-increasing function of $d$ with a decay that depends on the inequality between the coefficients of $\h_M$. 
If we characterize $\rho(d)$ in relation to the distribution of $\h_M$, we can use the stop condition result in reverse to deduce $\Ex{\z}{\hat{L}}$ from $\h_M$ and assess how the MSE varies with different MPC amplitude distributions. In the next section we propose a ``compressibility index'' of $\h_M$  related to 
a lower bound of $\rho(d)$ that holds for any CS algorithm. 

\subsection{Covariance Matrix}

The covariance matrix of the estimator is also given by a two-term expression. For compact notation, we define the complementary matrix $\Upsb_{\hat{\Tau}}=(\I_M-\Pb_{\hat{\Tau}}\Pb_{\hat{\Tau}}^\dag)\Pb_{\{\tau_\ell\}_{\ell= 1}^{L}}$, which satisfies $\h_E=\h_K-\h_S=\F_{K,M}\Upsb_{\hat{\Tau}}\ab$; and define the noise projection $\z_b=\hat{\bb}-\bb=\Pb_{\hat{\Tau}}^\dag\frac{K}{N}\F_{N/K,M}^H\z_N$. Using this,
the error covariance matrix of any two-step estimator may be denoted as follows
\begin{equation}
 \label{eq:covOMP}
 \begin{split}
 &\Sb_{\tilde{\h}}=\\
 &\quad\F_{K,M}\Ex{\z}{\Pb_{\hat{\Tau}}(\z_b\z_b^H)\Pb_{\hat{\Tau}}^H+\Upsb_{\hat{\Tau}}(\ab\ab^H)\Upsb_{\hat{\Tau}}^H}\F_{K,M}^H.
 \end{split}
\end{equation}
Again, we observe the DFT matrix $\F_{K,M}$ in both sides, that is interpreted as a 
$M$ to $K$ periodic sinc interpolation filter \eqref{eq:covSinc}. The terms in-between depend on the pulse-delay matrix similarly to \eqref{eq:covMLaK}. The main difference is that in the case of CS estimation the ``inner'' matrix contains two terms and neither is associated with a Gaussian distribution due to the dependency of $\hat{\Tau}$ with $\z_N$. In addition $\Upsb_{\hat{\Tau}}\ab$ cannot be evaluated without knowing $\Pb_{\{\tau_\ell\}_{\ell= 1}^{L}}$ and $\ab$. This means that in practical receivers that need the covariance of the channel estimation error, the use of CS estimators instead of conventional estimators raises the need to design approximations of $\Sb_{\tilde{\h}}$.

\section{Compressibility Analysis}
\label{sec:OMPstop}

Since the coefficients of $\h_M$ in \eqref{eq:chanPb} are not i.i.d. and we do not consider a Gaussian sensing matrix, the results relating compressibility and the fourth-moment in \cite{Gribonval2012} do not apply directly. We define the Compressibility Index (CI) of an arbitrary channel vector $\h_M$ as follows:
\begin{equation}
 \label{eq:CIdef}
 \textnormal{CI}(\h_M)=\frac{(\sum_{n=0}^{M-1}|h_M[n]|^2)^2}{M(\sum_{n=0}^{M-1}|h_M[n]|^4)}.
\end{equation}
Which is based on Jain's fairness index for schedulers \cite{Jain1984} and has the following properties:
\begin{itemize}
 \item If exactly $L\in\{1\dots M\}$ elements of $\h_M$ are non-zero and equal, the CI is $L/M$
 \item 
The inverse of the CI can be rewritten as a normalized estimator of the fourth moment of a sequence. So if $\vv$ is a size-$M$ i.i.d. vector following a zero-mean distribution $f(v[n])$, then 
 \begin{equation}
 \begin{split}
 \lim_{M\to\infty}\frac{1}{\textnormal{CI}(\vv)}&=\lim_{M\to\infty}\frac{\left(\frac{1}{M}\sum_{n=1}^M|v[n]|^4\right)}{\left(\frac{1}{M}\sum_{n=1}^M|v[n]|^2\right)^2}\\
 &=\frac{\Ex{\vv}{|v[n]|^4}}{\Ex{\vv}{|v[n]|^2}^2}\\
 &=\kappa(v[n])
 \end{split}
 \end{equation}
 where $\kappa$ is the \textit{kurtosis} of the distribution $f(v[n])$. Therefore for large i.i.d. vectors, the CI becomes the normalized fourth moment, and is consistent with the metric used in \cite{Gribonval2012}.
 \item For channels modeled as a sum of planar waves as in \eqref{eq:chanM} the CI of $\h_M=\Pb_{\{\tau_\ell\}_{\ell=1}^L}\ab$ is strongly related to the CI of the set $\{\alpha_{\ell}\}_{\ell=1}^{L}$. In particular, if the delays $\{\tau_\ell\}_{\ell=1}^{L}$ are exact multiples of $T$ and $p(t)$ is a Nyquist pulse, then $\textnormal{CI}(\h_M)=\frac{L}{M}\textnormal{CI}(\ab)$. 
 \item The CI is invariant to scale, so for any scalar $\lambda\in\mathbb{C}$ we get $\textnormal{CI}(\lambda\h_M)=\textnormal{CI}(\h_M)$. This means that we can disregard the normalization of the set $\{\alpha_{\ell}\}_{\ell=1}^{L}$ specified by \cite{Mathew2016,Specification2017} and $\textnormal{CI}(\h_M)$ depends on the fourth moment of the lognormal distribution that generates $\{\overline\alpha_{\ell}\}_{\ell=1}^{L}$.
\end{itemize}

The CI is interpreted in the sense that a vector with score $\mathrm{CI}(\h_M)$ has the same score as a size-$M$ vector with exactly $M\mathrm{CI}(\h_M)$ non-zero coefficients of equal power. For example, a Gaussian i.i.d. vector with $\kappa=2$, when $M$ is very large, has the same CI as a strictly-sparse vector with $M/2$ non-zero equal elements; and a Bernoulli-Lognormal i.i.d. vector with $L/M$ non-zero coefficients, when $\kappa=e^{4\sigma^2}$ for $\sigma^2=\log(10)/4$, has a CI equal to a strictly sparse vector containing exactly $L/10$ coefficients of equal magnitude.

First, we consider an ``oracle'' CS benchmark. For a given delay dictionary $\Phib_{N_T}$ the ``oracle'' reveals in $d$ iterations the set of $d$ columns of $\Phib_{N_T}$ that minimizes $\|\h_{N/K}-\Phib_{N_T}\hat{\bb}_{N_T}\|_2^2$. By definition this ``oracle'' has minimal residual among any CS algorithm that estimates $\hat{\Tau}$ using the same dictionary, that is the oracle residual satisfies $\overline\rho(d)\leq\rho(\hat{\Tau})$ for $d=|\hat{\Tau}|$.

Second, we focus on the case where $N_T=M$ and $p(t)$ is a Nyquist pulse to show the connection between $\textnormal{CI}(\h_M)$ and $\rho(d)$. We define the sorted sequence of magnitudes of $\h_M$ as
$$m_i=\begin{cases}
       \max (\{|h_M[n]|^2\}\textcolor{black}{_{n=0}^{M-1}})& i=1\\
       \max (\{|h_M[n]|^2\}\textcolor{black}{_{n=0}^{M-1}}\setminus \bigcup_{j=1}^{i-1}\{m_j\})& i>1\\
      \end{cases}
$$
where $m_1$ is the power of the largest coefficient, $m_2$ the second, and so on. Since $\sum_{\ell=1}^{L}\alpha_\ell^2=1$ and $\|\pp(\tau)\|^2=1$, we get $\|\h_M\|^2=1$ and $1-\sum_{i=1}^{d-1}m_i=\sum_{i=d}^{M}m_i$. In $d$ iterations the oracle recovers $\{m_i\}_{i=1}^{d}$ perfectly, and the residual is the recursive function
\begin{equation}
\label{eq:defrhoid}
\begin{split}
\overline{\rho}(d)&\triangleq \frac{1}{K}\left(1-\sum_{i=1}^{d}m_i\right)\\
&=\overline{\rho}(d-1)\left(1-\frac{m_{d}}{\sum_{i=d}^{M}m_i}\right).
\end{split}
\end{equation}

Finally, we relate $\overline{\rho}(d)$ to the CI.
We first define a set containing the residual channel coefficient powers after the $d$ strongest are perfectly recovered, denoted by
$$\mathcal{R}_d\triangleq \{m_j\}_{j=d+1}^{M}=\{|h_M[n]|^2\}\textcolor{black}{_{n=0}^{M-1}}\setminus \{m_j\}_{j=1}^{d}.$$
The CI of this set evaluates to
$$\textnormal{CI}(\mathcal{R}_d)=\frac{(\sum_{i=d+1}^{M}m_i)^2}{(M-d)\sum_{i=d+1}^{M}m_i^2}.$$

By definition the CI satisfies the following inequalities
\begin{equation}
\label{eq:CIbounds}
\small
\frac{1}{(M-d)\sqrt{\textnormal{CI}(\mathcal{R}_d)}}\leq\frac{m_{d+1}}{\sum_{i=d+1}^{M}m_i}\leq\frac{1}{\sqrt{(M-d)\textnormal{CI}(\mathcal{R}_d)}}.
\end{equation}
Introducing the right hand side of \eqref{eq:CIbounds} into \eqref{eq:defrhoid} we get
\begin{equation}
\label{eq:rhoLB1}
\begin{split}
  \rho(d)\geq\overline{\rho}(d)
  &\geq\prod_{i=0}^{d-1}\left(1-\frac{1}{\sqrt{(M-i)\textnormal{CI}(\mathcal{R}_i)}}\right).
  \end{split}
\end{equation}

So we see that, if the $\textnormal{CI}(\mathcal{R}_i)$'s are high, ${\rho}(d)$ cannot decay fast. The left hand side of \eqref{eq:CIbounds} can be used to write a converse (if $\textnormal{CI}(\mathcal{R}_i)$'s are low $\overline{\rho}(d)$ must decay fast).
We plot the empirical values of \eqref{eq:rhoLB1} in Fig. \ref{fig:boundsrho} averaged over $\h_M$, where we see that it is a lower bound of $\overline{\rho}(d)$.

This bound depends on $\textnormal{CI}(\mathcal{R}_i)$ for $0\leq i\leq d-1$, while we want a bound depending on $\textnormal{CI}(\h_M)=\textnormal{CI}(\mathcal{R}_0)$. We introduce the following assumption: When $d\ll M-d$ we assume that removing the largest element increases the CI by at least $\frac{M-d+1}{M-d}$, i.e.
\begin{equation}
\label{eq:assumptionCI}
\textnormal{CI}(\mathcal{R}_d)\gtrsim\frac{M-d+1}{M-d}\textnormal{CI}(\mathcal{R}_{d-1})
\end{equation}
This assumption was tested numerically and seems to hold in an overwhelming majority of the realizations of the the NYU mmWave MPC model \cite{Mathew2016}, as shown in Fig. \ref{fig:histogramCI}. Recursively replacing $\textnormal{CI}(\mathcal{R}_d)$ with $\textnormal{CI}(\mathcal{R}_{d-1})$ until $CI(\h_M)=\textnormal{CI}(\mathcal{R}_0)$ we can write the following
{\small \begin{equation}
\label{eq:rhoLB2}
\overline{\rho}(d)\gtrsim\left(1-\frac{1}{\sqrt{M \textnormal{CI}(\h_M)}}\right)^d.
\end{equation}}
This approximation shows how $\textnormal{CI}(\h_M)$ influences ``compressibility'': the lower the compressibility index $\textnormal{CI}(\h_M)$, the faster the geometric decay of (an approximation of) $\overline{\rho}(d)$. This suggests that if \eqref{eq:assumptionCI} was a strict equality instead of an approximation we would see OMP outperform BPDN as established in \cite{Schnass2018}. We represented \eqref{eq:rhoLB2} in Fig. \ref{fig:boundsrho}, where we see that it approximates \eqref{eq:rhoLB1} and lower bounds $\rho(d)$. Fig. \ref{fig:L1fig} compares OMP and BPDN, where the MSE difference was less than $1.5$ dB, but OMP does not fully dominate as in \cite{Schnass2018}, since \eqref{eq:rhoLB2} is an approximation.

As we said $CI(\h_M)$ is connected to $\{\alpha_\ell\}_{\ell=1}^{L}$ and if $p(t)$ is Nyquist and the delays are exact multiples of $T$ then $\textnormal{CI}(\h_M)=\frac{L}{M}\textnormal{CI}(\{\alpha_\ell\}_{\ell=1}^{L})$. Since we can ignore normalization \eqref{eq:rhoLB2} becomes
\begin{equation}
\label{eq:rhoLB3}\overline{\rho}(d)\gtrsim\left(1-\frac{1}{\sqrt{L \textnormal{CI}(\{\overline\alpha_\ell\}_{\ell=1}^{L})}}\right)^d.
\end{equation}
We represented the simulation values of \eqref{eq:rhoLB3} in Fig. \ref{fig:boundsrho}. We see that even for the actual NYU mmWave MPC model the approximation $\textnormal{CI}(\h_M)\simeq\frac{L}{M}\textnormal{CI}(\{\overline\alpha_\ell\}_{\ell=1}^{L})$ approximates \eqref{eq:rhoLB2} closely.

Finally, if the marginal distributions of $\overline\alpha_\ell$ and $\tau_\ell$ for different $\ell$ are identical and $L$ is large, 
\begin{equation}\left(1-\frac{1}{\sqrt{L \textnormal{CI}(\{\overline\alpha_\ell\}_{\ell=1}^{L})}}\right)^d\approx\left(1-\sqrt{\frac{\kappa(\overline{\alpha})}{L}}\right)^d,
\end{equation}
where $\kappa(\overline{\alpha})$ is the kurtosis of the marginal distribution $f(\overline{\alpha})=\Ex{\tau_\ell}{f(\overline{\alpha}_\ell|\tau_\ell)}$. Our simulations show that the cumulative density function (CDF) of $\mathrm{CI}(\h_M)$ changes for different distributions of $\{\overline\alpha_\ell\}_{\ell=1}^{L}$, and probability concentrates in lower values of $\mathrm{CI}(\h_M)$ when $\{\overline\alpha_\ell\}_{\ell=1}^{L}$ are lognormal distributed or depend on $\{\tau_\ell\}_{\ell=1}^L$ as in the NYU mmWave MPC model (Fig. \ref{fig:compress}).
  
In conclusion, when the MPC amplitude distribution kurtosis increases, the CI of the channel is smaller, $\overline{\rho}(d)$ decays faster, the ratio $\hat{L}/L$ is lower and $\h_M$ is more ``compressible'' in the sense that CS algorithms can estimate $\h_M$ by retrieving lower number of dominant MPCs. This at the same time reduces the MSE. The ratio $\hat{L}/L$ characterizes a form of \textit{statistical sparsity} due to the fourth moment of the distribution that is distinct from the \textit{physical sparsity} characterized by the ratio $L/M$ and motivated by the fact that there are fewer MPCs than channel taps. We represented $\Ex{\h_M}{\overline{\rho}(d)}$ in Fig. \ref{fig:boundsrho} for the NYU mmWave MPC model \cite{Mathew2016}, Rayleigh MPC amplitudes, and an i.i.d. Gaussian non-sparse channel, where indeed in the NYU mmWave MPC model \cite{Mathew2016} the $\overline\rho(d)$ decays much faster, making the necessary number of iterations $\Ex{\h_M}{\hat{L}}$ much lower.

A limitation of the ``oracle estimator'' analysis is that it lower bounds the residual of practical CS schemes, $\overline{\rho}(d)\leq \rho(d)$. Therefore, there is no use in further developing the left hand side of \eqref{eq:CIbounds}, and in fact steps \eqref{eq:assumptionCI} and \eqref{eq:rhoLB2} do not apply to the upper bound case. Thus, we only proved the claim in one direction, i.e. a large fourth moment in the distribution of $\overline{\alpha}_\ell$ is \textit{necessary} for statistical sparsity (small $\hat{L}/L$), but we have not shown if a high $\kappa(\overline{\alpha})$ is \textit{sufficient} for small $\hat{L}/L$. However, we conjecture that this holds due to the following observations: $\textnormal{CI}(\mathcal{R}_d)$ also can bound $\overline{\rho}(d)$ from above using the left hand side of  \eqref{eq:CIbounds}, the compressibility of generalized-Gaussian distributed i.i.d. vectors improves as the fourth moment increases \cite[Fig. 6]{Gribonval2012}, and we have always observed statistical sparsity for high $\kappa(\overline{\alpha})$ in our simulations.

We note that other metrics of ``channel compressibility'' could have been adopted instead of the fourth-moment based CI. We adopt this metric because the fourth moment has arisen as a  prominent metric to measure compressibility or more general unevenness in statistics in other analyses that are similar to ours. This includes works on compressed sensing, time-varying fading channel statistics, and non-coherent channel capacity \cite{Gribonval2012,Rao2015,fgomezUnified} (see Appendix \ref{app:otherworks}).
 
\section{Numeric Validation}
\label{sec:num}

We simulate an OFDM system with $T=2.5$ ns as in \cite{Mathew2016} and $p(t)=\textnormal{sinc}(t/T)$. This makes the bandwidth $B=400$MHz. We choose the CP length $M=128$, for a realistic maximum delay spread $D_s=TM=320$ ns \cite{Mathew2016}, and use $N=M=128$ pilots. We demonstrate our analysis for $10^3$ realizations of the NYU mmWave MPC model \cite{Mathew2016} for a $60$ m non-line-of-sight link with $f_c=28$ GHz. Simulation results for the 3GPP model \cite{Specification2017} were similar. For a transmitted power in the range $1$ to $10$ W at this distance the median SNR ranges from $-4$ to $6$ dB. Since $\lambda\simeq1$ cm, we choose the DFT size $K=512$, so the frame duration is $KT=1.28$ $\mu$s and the phases vary very little over ten frames for speeds up to $100$ m/s.
\subsection{OMP Error Analysis Validation}

First we look at $\hat{L}$. In Fig \ref{fig:ntaps128} we represent $\Ex{\z,\h_M}{\hat{L}}$ vs SNR for three estimators: no-superresolution OMP ($N_T=M$), conventional superresolution OMP ($N_T=4M$), and our OMP Binary-search Refinement proposal ($N_T=M$ and $\delta_\mu=10^{-2}$, equivalent to OMP with $N_T=100M$). The NYU mmWave MPC model \cite{Mathew2016} generates $L$ as a random number that does not depend on SNR, and in the simulation the average was $\Ex{L}{L}=32$, represented by a red line in Fig \ref{fig:ntaps128}. The bars show the number of MPCs retrieved by OMP. At SNR$=0$ dB no-superresolution OMP estimates about 8 MPCs, whereas both algorithms with superresolution obtain 5. At high SNR orthogonal OMP overshoots to 35 due to its insufficient delay resolution, whereas the dictionaries with superresolution estimate about 20 dominant MPCs. While obtaining a very similar result, OMP was $4\times$ slower with $N_T=4M$ whereas the run time of OMPBR was almost identical to OMP with $N_T=M$. For practical SNR values, a number of MPCs of the NYU mmWave MPC model are too weak to be recovered by OMP, and $\hat{L}<L$.

\begin{figure}
\centering
 \subfigure[Mean value of $\hat{L}$ and std. deviation for OMP vs SNR]{
 \includegraphics[width=.9\columnwidth]{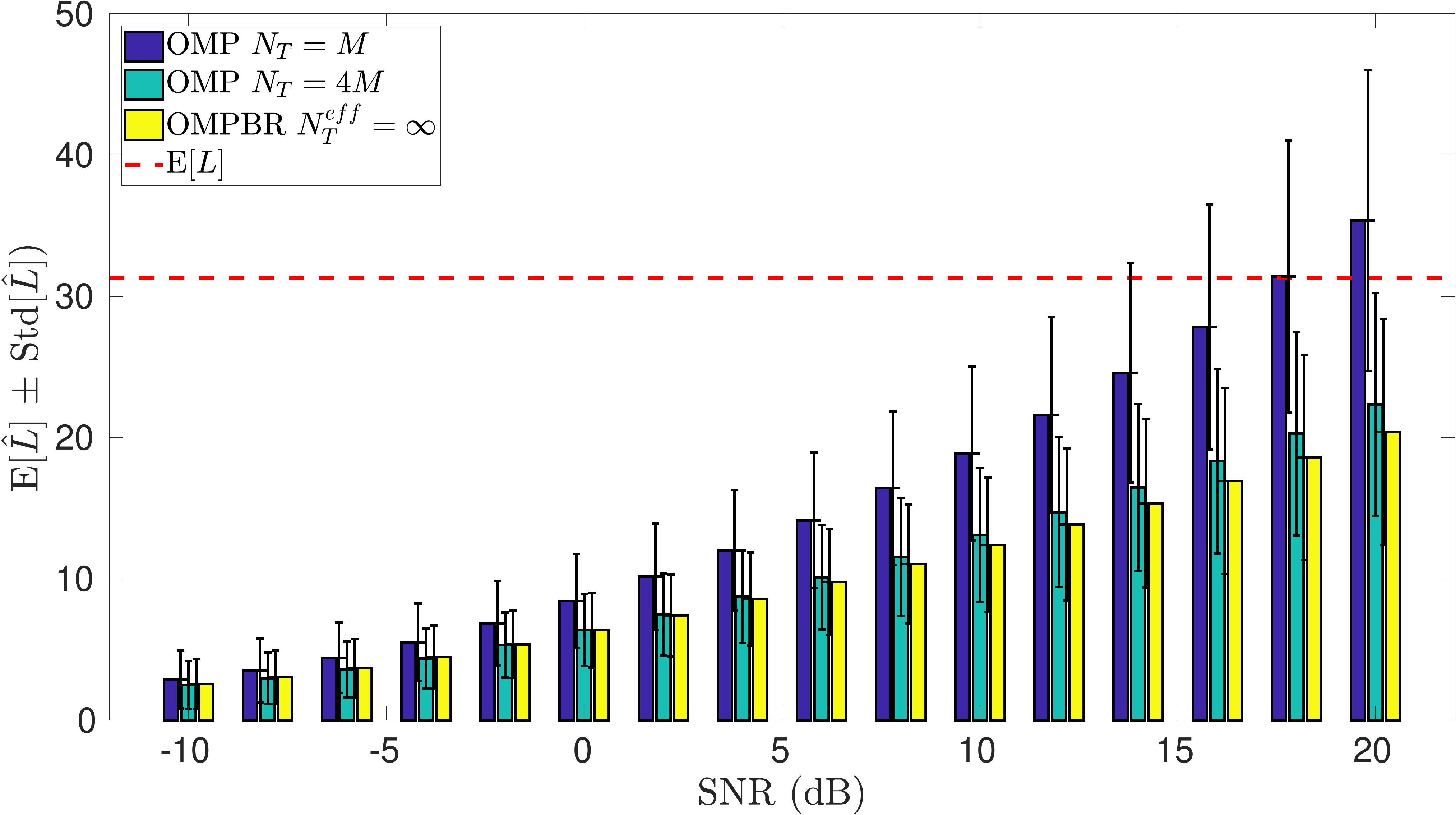}
 \label{fig:ntaps128}
 }
 \hspace{.05in}
 \subfigure[Error variance ($\nu^2$) for different channel estimators]{
 \includegraphics[width=.95\columnwidth]{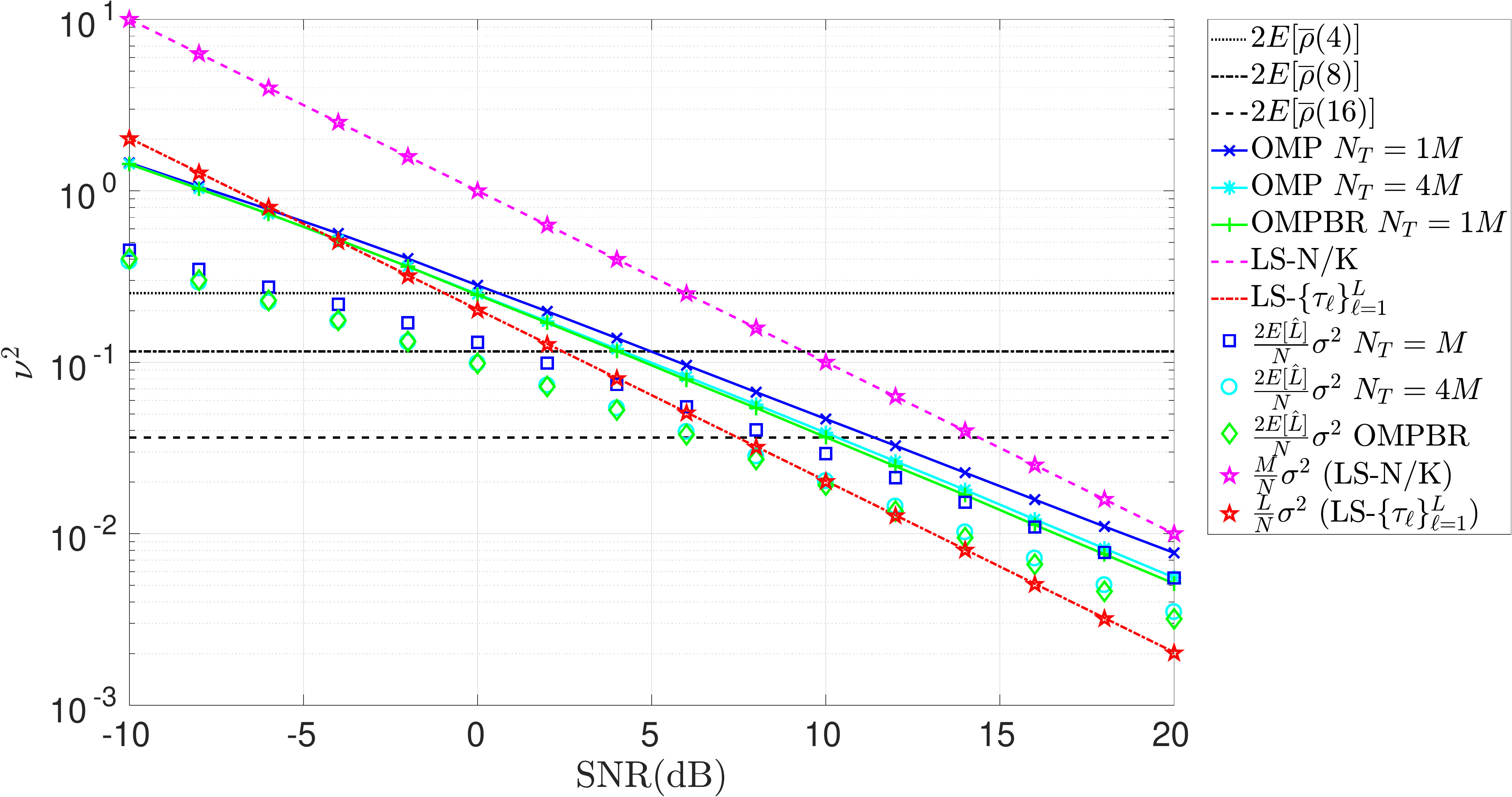}
 \label{fig:mse128}
 }
 \caption{Numerical validation of OMP estimator error analysis.}
 \label{fig:estimperf}
\end{figure}

We represent the MSE ($\nu^2$) in Fig. \ref{fig:mse128}. The CS estimation error averaged over $\h_M$ is unbiased and $\|\h_M\|^2={\sum_{\ell=1}^L}\alpha_\ell^2=1$, so the Normalized MSE (NMSE), the MSE, and the variance are equal. The purple line corresponds to non-sparse LS-ML estimator benchmark defined in Sec. \ref{sec:LSMLconventional}. Its numerical result coincides with our theoretical prediction \eqref{eq:errMLMK}, represented by star-shaped purple bullets in the figure. We represent with a red line the genie-aided LS-ML sparse estimator benchmark defined in Sec. \ref{sec:LSMLMPC}, where we see again the error coincides with our theoretical prediction  \eqref{eq:errMLaK} (red star bullets). Since $\Ex{\h_M}{L}=32$ and $N=M=128$, the sparse benchmark has a gain of approximately $6$ dB. Next, we have the OMP algorithm with $N_T=M$,  $N_T=4M$, and OMPBR. The error variance analysis in Sec. \ref{sec:CS} predicts the error converges to $\frac{\Ex{\z,\h_M}{2\hat{L}}}{N}\sigma^2$ in the high-SNR regime. We represent the empirical error with solid green, cyan and blue lines, and we represent with squared, circular and diamond bullets of the same colors the theoretical lower bound of the error evaluated taking the empirical values of $\Ex{\z,\h_M}{\hat{L}}$ previously displayed in Fig \ref{fig:ntaps128}. We confirm that the theoretical results approximate the empirical error from below and become tighter as the SNR increases. The CS estimators outperform the non-sparse estimator benchmark but fail to match the genie-aided sparse estimator benchmark. At low SNRs OMP displays a lower $\Ex{\z,\h_M}{\hat{L}}$ and achieves a greater advantage over the non-sparse benchmark.

As a reference, we also represented in Fig. \ref{fig:mse128} the lower bound to the residual function $2\overline{\rho}(d)$ as defined in Sec. \ref{sec:OMPstop} for several values of $d$. Our error analysis showed that OMP stops for a value of $\hat{L}$ where approximately $\Ex{\z,\h_M}{\rho(\hat{L})}\simeq\frac{\hat{L}}{N}\sigma^2$. We verify that as the SNR increases the MSE of the OMP estimators decreases slower than $\sigma^2$. As the OMP MSE decreases it crosses the horizontal lines corresponding to $\overline{\rho}(d)$ for increasing values of $d$, consistently with the observation that OMP retrieves more MPCs at high SNR in Fig. \ref{fig:ntaps128}.

\subsection{Stopping point of OMP vs $\textnormal{CI}(\h_M)$}

We display the decaying function $\rho(d)$, related to channel ``compressibility'', and the approximations we derived using the Compressibility Index of $\h_M$ in Fig. \ref{fig:boundsrho}. We depict the empirical average of the function $\Ex{\h_M}{\overline{\rho}(d)}$ versus $d$ for  the NYU mmWave MPC model using a solid red line. We also depict the lower bounds \eqref{eq:rhoLB1} and \eqref{eq:rhoLB2} using purple and blue lines, respectively. The green line represents a Bernoulli-Lognormal simplified-mmWave MPC model, where $\textnormal{CI}(\h_M)$ is exactly $\frac{L}{M}\textnormal{CI}(\{\alpha_\ell\}_{\ell=1}^{L})$.  We verify the $\Ex{\h_M}{\overline{\rho}(d)}$ decays fast for the the NYU mmWave MPC model and is lower bounded by our approximations. Comparatively, the dash-dotted and dashed black lines represent $\Ex{\h_M}{\overline{\rho}(d)}$ for Rayleigh-distributed $\{\alpha_\ell\}_{\ell=1}^{L}$ and for non-sparse Gaussian vectors, showing that for these low-kurtosis distributions $\overline{\rho}(d)$ decays much more slowly. A reference black dotted line representing $d/N$ in Fig. \ref{fig:boundsrho} illustrates the interpretation of $\xi$ in OMP. For each different MPC model, the points where $\rho(d)$ meets this increasing line indicate approximately the number of dominant MPCs, which is much higher for low-kurtosis channels.

\begin{figure}
 \centering 
 \subfigure[Relation between $\Ex{\h_M}{\overline{\rho}(d)}$ for a mmWave channel, a Rayleigh channel, and the bounds obtained in Sec. \ref{sec:OMPstop}.]{
 \includegraphics[width=.95\columnwidth]{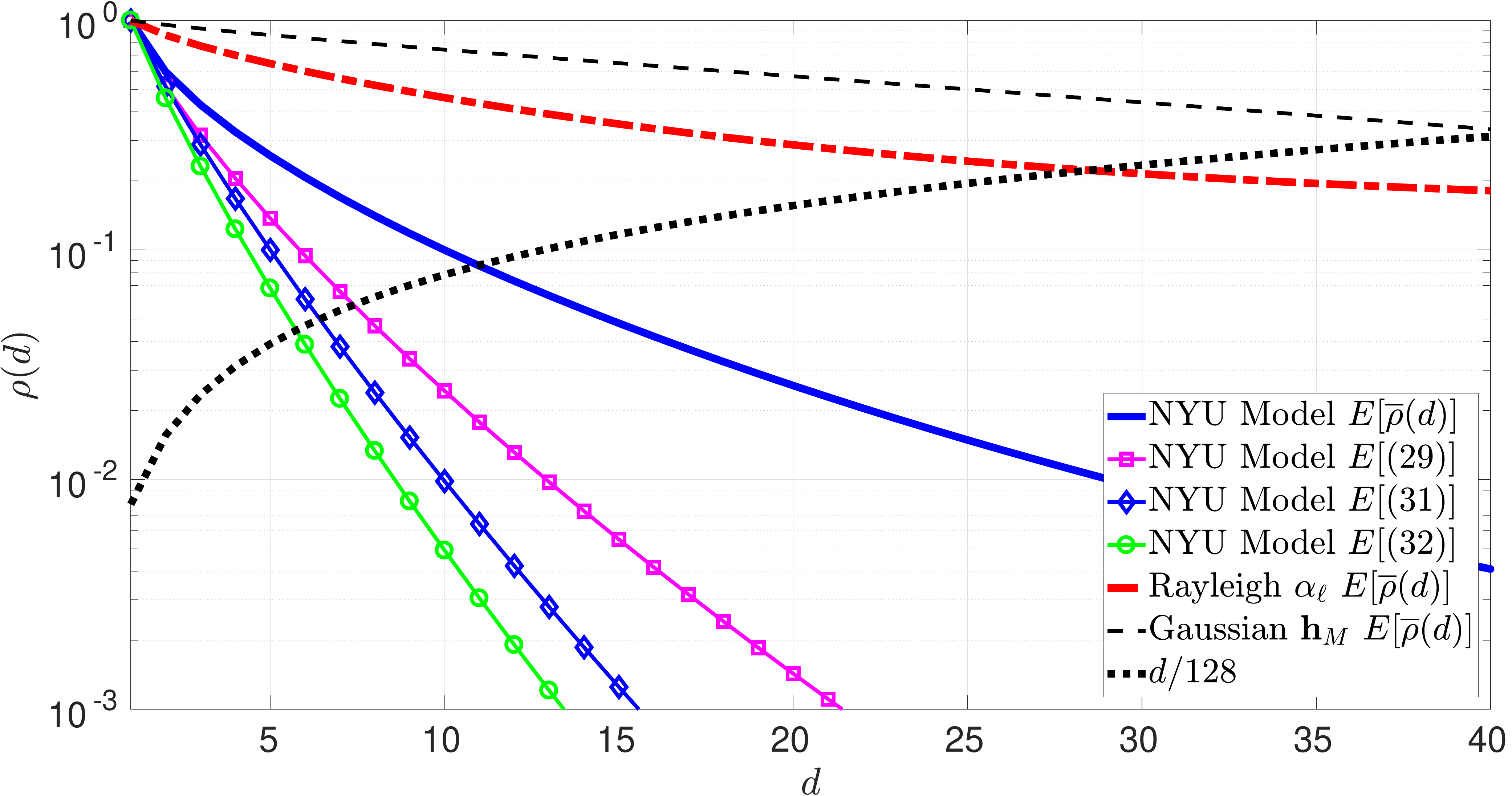}
 \label{fig:boundsrho}
 }
 \hspace{.05in}
 \subfigure[Histogram verification of the assumption $CI(\mathcal{R}_d)\stackrel{L\gg1}{\gtrsim}\frac{M-d+1}{M-d}CI(\mathcal{R}_{d-1})$ in \eqref{eq:assumptionCI}.]{
  \includegraphics[width=.95\columnwidth]{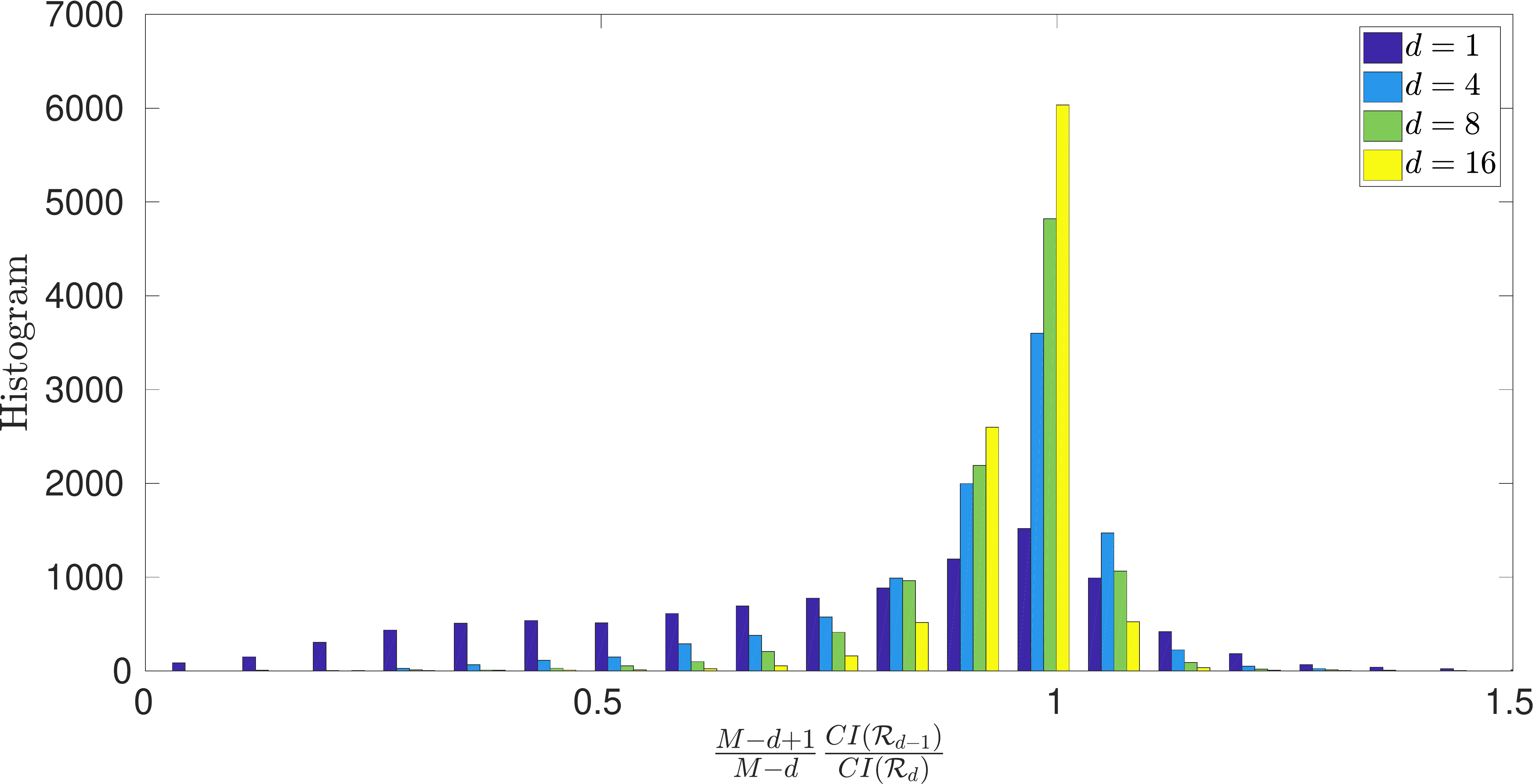}
  \label{fig:histogramCI}
 }
 \caption{Numerical validation of channel vector compressibility analysis.}
 \label{fig:CIfigs}
\end{figure}

In the derivation of \eqref{eq:rhoLB2} we relied on the assumption that for a set $\mathcal{R}_d$ containing $M-d$ elements, when $d\ll M-d$, if the largest element of the set is removed the CI increases by a factor $\frac{M-d+1}{M-d}$ with high probability. We depict the histogram of the normalized CI variation at different values of $d$ in Fig. \ref{fig:histogramCI} to show that this assumption is reasonable in the channels considered in our simulation  (the ratios are approximately equal to $1$ or less with high probability).

\subsection{Compressibility Index of random channels $\h_M$}

We defined the $\textnormal{CI}(\h_M)$ of an arbitrary vector $\h_M$ as our compressibility metric. We study the effect on the CI for random vector distributions where $\h_M$ is generated using four different models. We represent the compressibility for different channels in Fig. \ref{fig:CICDF}.

In all four cases we generate the delays as in the NYU mmWave MPC model \cite{Mathew2016}, using \eqref{eq:chanPb} with the same $\Pb_{\{\tau_\ell\}_{\ell=1}^{L}}$, and we adopt four different models for $\{\alpha_{\ell}\}_{\ell=1}^{L}$. First, we followed the NYU mmWave MPC model \cite{Mathew2016}, i.e. lognormal distributions where the mean depends on the delay. Second, we introduce a lognormal model without an exponential delay decay, i.e. $\log(\overline{\alpha}_{\ell})=\zeta_\ell$, where $\zeta_\ell$ is a normal with zero mean. Next, we replace the lognormal with an Rayleigh distribution to study the influence of the fourth moment of the distribution. In the third case $\overline{\alpha}_\ell$'s are generated as a Rayleigh distribution with adjusted delay-decay mean $e^{-\tau_\ell/\Gamma}$, and in the fourth case $\overline{\alpha}_\ell$ are generated as a Rayleigh with unit mean independent of $\tau_\ell$.

In Fig. \ref{fig:CICDF} we represent the c.d.f. of $\frac{M}{L}\mathrm{CI}(\h_M)$. We see that, as expected by the kurtosis of the Rayleigh distribution ($\kappa=2$), the adjusted CI satisfied $\frac{M}{L}\mathrm{CI}(\h_M)>.5$ for the 85\% percentile, in other words, comparable to a strictly sparse vector with $L/2$ equal-power coefficients. In comparison, both the introduction of delay-decaying means and the replacement of the Rayleigh distribution with a lognormal improved the ``compressibility'' of $\h_M$, reducing the 85\%-percentile adjusted CI to about $.2$. That is, comparable to a strictly sparse vector with $L/5$ equal-power coefficients. Finally, in the measurement-based  NYU mmWave MPC model, both the lognormal and exponential delay decay are adopted, and the 85\%-percentile of $\frac{M}{L}\mathrm{CI}(\h_M)$ is approximately $.1$, comparable to a strictly-sparse vector with $L/10$ equal non-zero coefficients.

Based on these CI observations, the MSE of any CS algorithm in the NYU mmWave MPC model will be much lower than  with Gaussian MPCs (Rayleigh amplitudes). This is verified in Fig. 4(b). For SNR$=0$ dB the MSE is 3 dB lower in the NYU mmWave MPC model than in a channel with Rayleigh distributed MPC amplitudes, while the intermediate models are $1.25$ dB and $2$ dB below the Rayleigh, respectively. This $3$ dB difference is numerically consistent with the quantitative prediction of our analysis, which shows that the MSE grows linearly with $\hat{L}$, and that, as we saw in Fig. 3(a), the value of $d$ such that $\Ex{}{\overline{\rho}(d)}=\frac{d}{N}\sigma^2$ with $\sigma^2=1$ is twice as high when $\alpha_\ell$ is Rayleigh distributed compared to the NYU mmWave MPC model.

\begin{figure}
 \centering 
 \subfigure[Adjusted CI C.D.F.]{
  \includegraphics[width=.95\columnwidth]{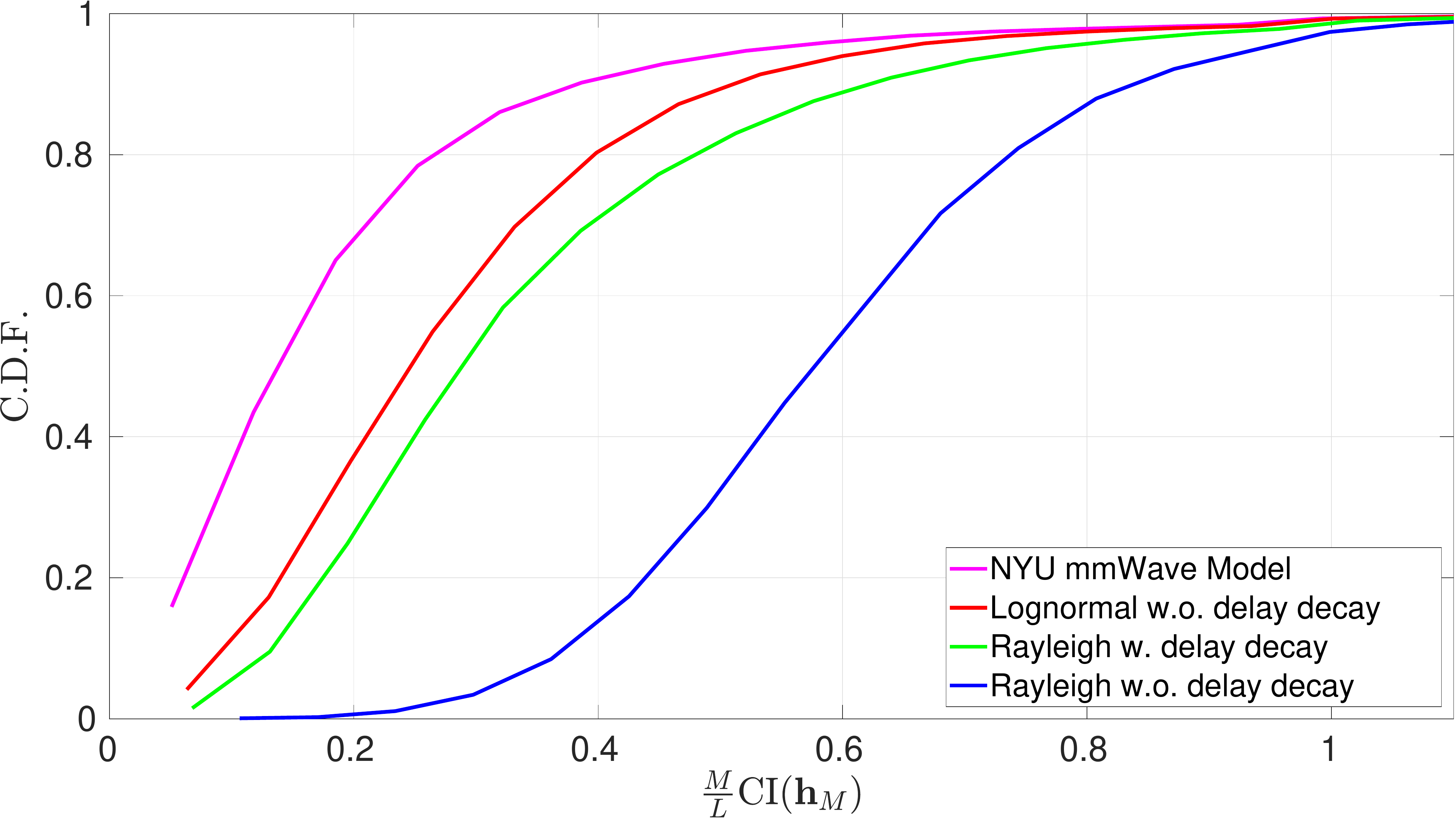}
 \label{fig:CICDF}
  }
 \subfigure[OMPBR MSE]{
  \includegraphics[width=.95\columnwidth]{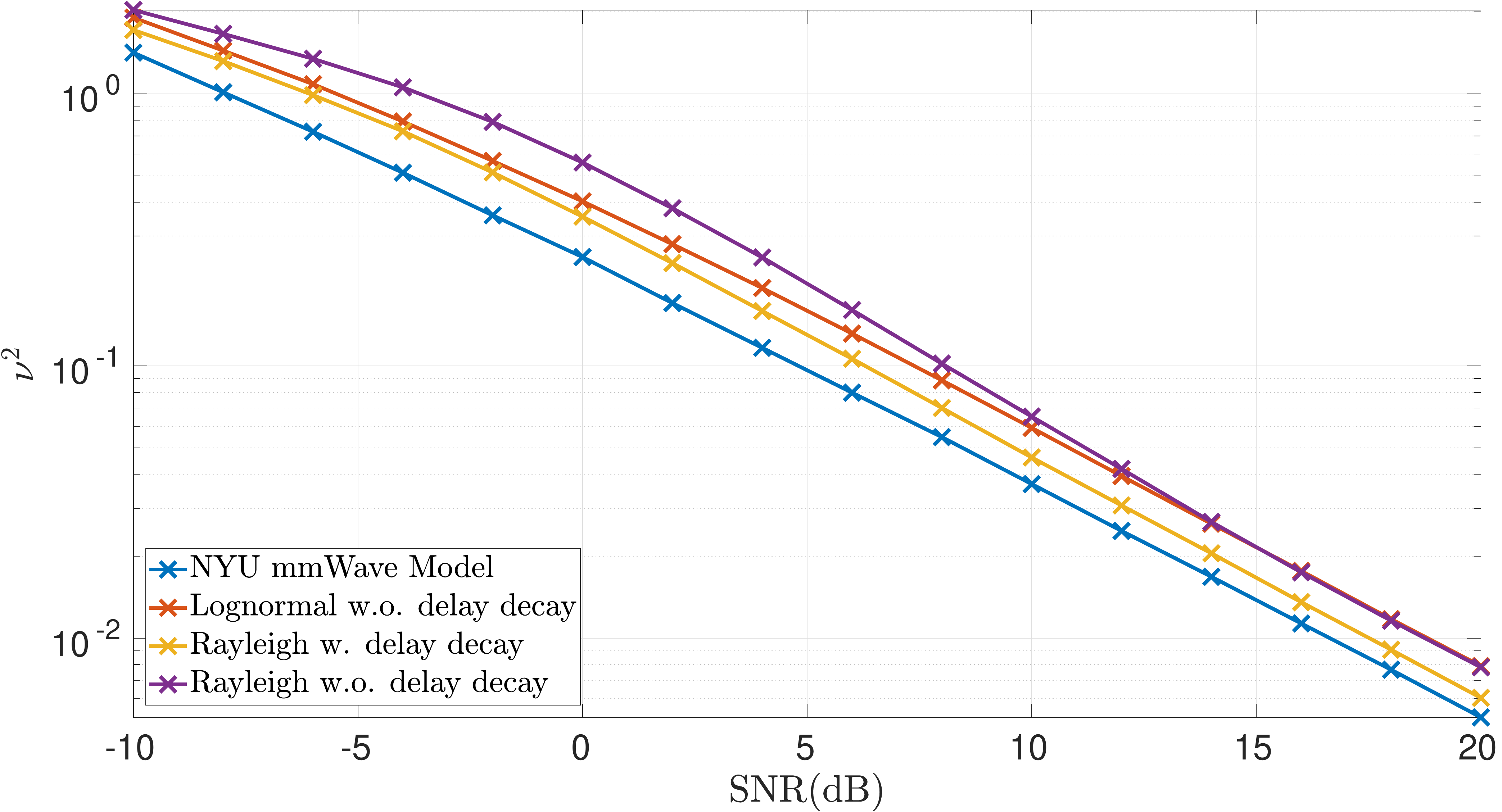}
 \label{fig:compressMSE}
  }
 \caption{Compressibility comparison for different channel models.}
 \label{fig:compress}
\end{figure}

\subsection{Greedy vs $\ell_1$ approaches}

In Fig. \ref{fig:L1fig} we represent the MSE for NYU mmWave vs Rayleigh $\{\alpha_\ell\}_{\ell=1}^L$, where we compare OMP, OMPBR and BPDN. BPDN solves $\min \|\hat{\bb}_{N_T}\|_1 \textnormal{ s.t. } \|\y_N-\Phib_{N_T}\hat{\bb}_{N_T}\|_2^2\leq \xi$ replacing the $\ell_0$ norm by the $\ell_1$ norm in \eqref{eq:bestsparse}. MATLAB's interior-point algorithm takes $10^4$ times more time than OMPBR in our simulation. BPDN can be adopted as ``direct'' solution or combined with LS, taking the indices of the ``direct'' solution above a threshold as a delay estimator $\hat{\Tau}_{\ell_1}$, and performing LS with matrix $\Pb_{\hat{\Tau}_{\ell_1}}$. OMPBR is only $0.86$ dB worse than BPDN for lognormal MPC amplitudes. This grows to $1.46$ dB for the Rayleigh case. The advantage of BPDN over OMP diminishes when $\overline\rho(d)$ decays faster. 
The impact on MSE of different channel models with the same algorithm was $3$ dB for both OMP and BPDN, which is greater than the MSE differences between OMP and BPDN for the same MPC amplitude model that we have observed.

\begin{figure}
 \centering 
 \subfigure[
 Lognormal
 ]{
  \includegraphics[width=.95\columnwidth]{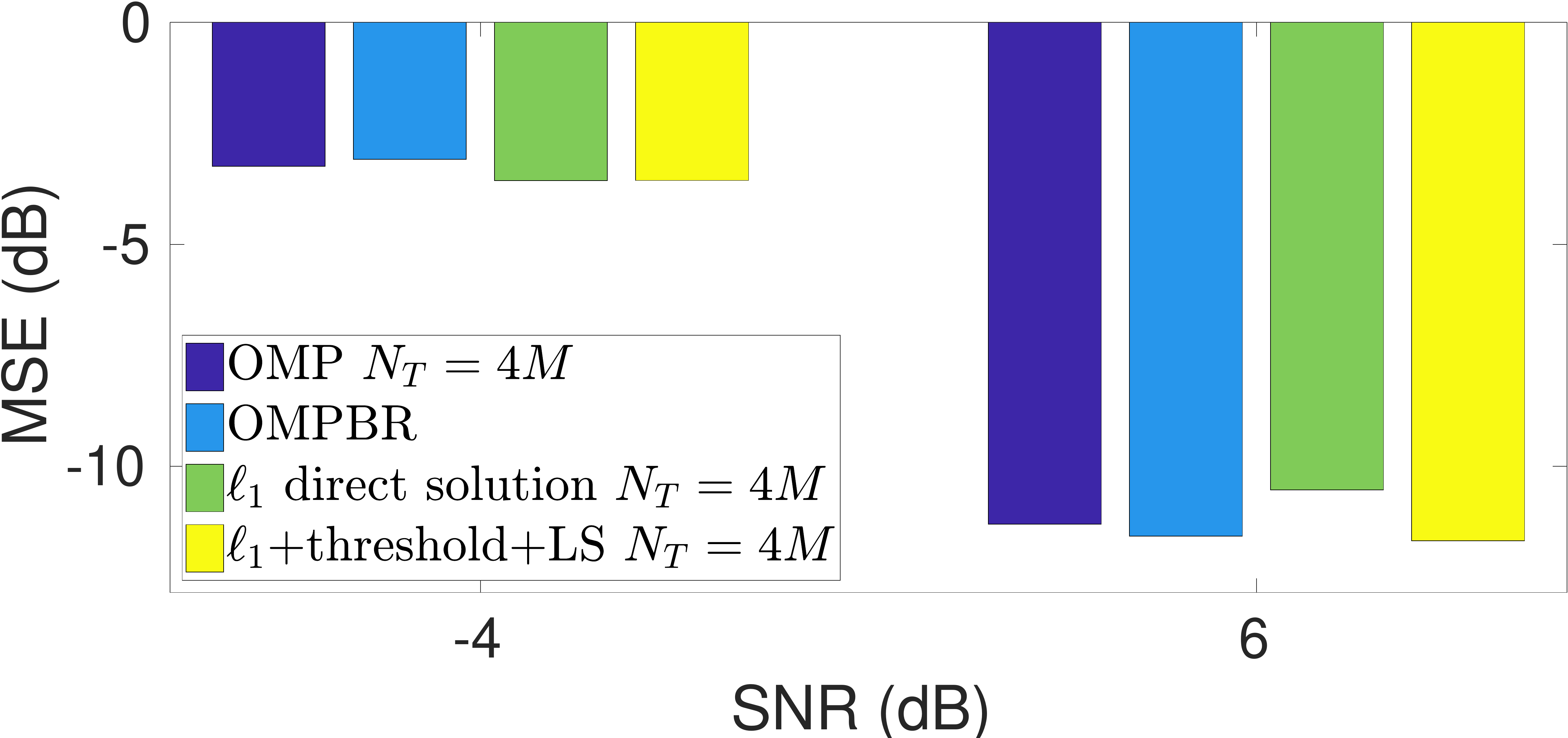}
  \label{fig:L1sfigL}
 }
 \hspace{.5in}
 \subfigure[Rayleigh]{
  \includegraphics[width=.95\columnwidth]{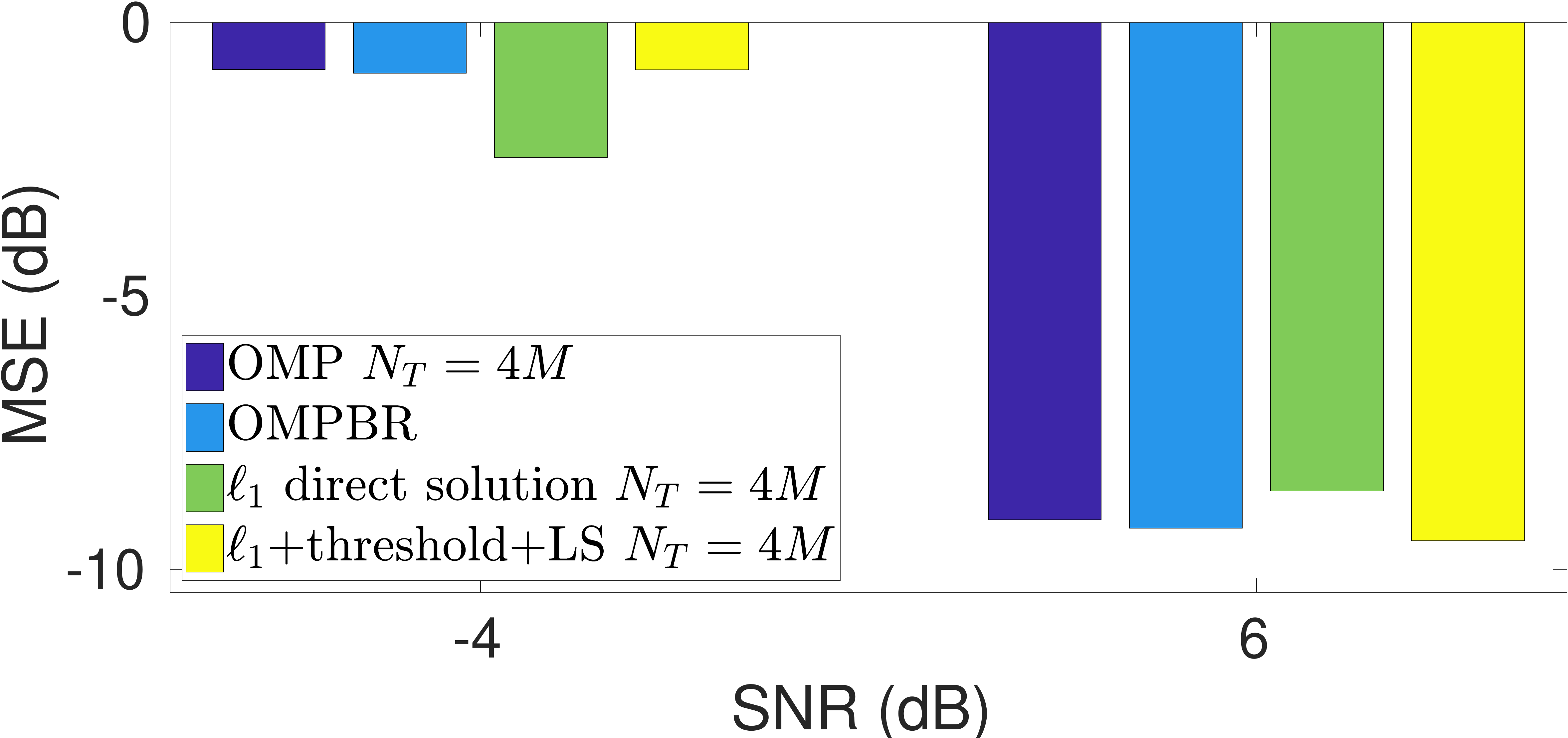}
  \label{fig:L1sfigR}
 }
 \caption{Numerical comparison of OMPBR and $\ell_1$ BPDN vs OMP with $N_T=4M$ in different channel models.}
 \label{fig:L1fig}
\end{figure}

\section{Effects on a MMSE OFDM Receiver}
\label{sec:receiver}

In this section we show how changes in the channel estimator affect the performance of an MMSE receiver. We assume the receiver uses an estimation of the channel modeled with an additive error $\hat{\h}_K=\h_K+\tilde{\h}_K$, where $\tilde{\h}_K$ is independent of the realizations of $\x$ and $\z$ in data subcarriers. The receiver knowns the statistics of the error $\Ex{\z,\h_K}{\tilde{\h}_K}=0$ (unbiased) and $\Ex{\z,\h_K}{\tilde{\h}_K\tilde{\h}_K^H}=\Sb_{\tilde{\h}_K}$. This holds both our LS benchmarks and for our hybrid estimator. The term ``perfect channel estimation'' refers to the special case $\tilde{\h}_K=0$ and $\hat{\h}_K=\h_K$.

We assume the receiver applies the MMSE linear equalizer to $\y$, defined as follows
\begin{equation}
\label{eq:mmsestatement}\B^*=\arg \min_{\B} \Ex{\x,\z,\tilde{\h}}{\|\B\y-\x\|^2}
\end{equation}
Denoting the statistics of $\x$ and $\z$ by $\Ex{\x}{\x}=0$, $\Ex{\x}{\x\x^H}=\Sb_\x$ and $\Ex{\z}{\z\z^H}=\sigma^2\I_K$ and using the assumption that $\x,\z,\tilde{\h}$ are independent, we derive this equalizer in Appendix \ref{app:mmse}:
\begin{equation}
\label{eq:mmsegen}
\B^*=(\Sb_{\x}\cdot\hat{\h}_K\hat{\h}_K^H+\Sb_{\x}\cdot\Sb_{\tilde{\h}_K}+\sigma^2\I_K)^{-1}\Sb_{\x}\D(\hat{\h}_K)^H.
\end{equation}

Examining \eqref{eq:mmsegen} we can deduce two qualitative influences of the channel estimator. First, the term $\Sb_{\x}\cdot\Sb_{\tilde{\h}_K}$ features the covariance of the error. We have shown the three estimation models present different covariance matrices  \eqref{eq:covMLMK},  \eqref{eq:covMLaK}, and \eqref{eq:covOMP}. This means that the ability of the MMSE equalizer \eqref{eq:mmsegen} to exploit correlations in the channel estimation error varies for the three estimation models. Second, $\x$ follows an i.i.d. distribution, we get $\Sb_{\x}=\I$ and $\Sb_{\x}\cdot\Sb_{\tilde{\h}_K}=\nu^2\I$. As a result, the non-diagonal terms in \eqref{eq:covMLMK},  \eqref{eq:covMLaK}, and \eqref{eq:covOMP} cease to affect \eqref{eq:mmsegen} when the inputs are i.i.d. If we wanted to exploit channel estimation error correlations, we would require non-independent inputs such as, for example, frequency domain coding.

We leave the design of such input coding strategies for future work. In this section, we present quantitative numerical performance results assuming $\x$ contains i.i.d. symbols and \eqref{eq:mmsegen} depends on $\nu^2$, but not on the non-diagonal coefficients of $\Sb_{\tilde{\h}}$.
For i.i.d. inputs $\B^*\y$ simplifies to a collection of independent scalar symbol-by-symbol channels at each subcarrier $k$ written as
\begin{equation}
\label{eq:mmsesimple}
b_{k}y_k=\frac{1}{|\hat{h}_k|^2+\nu^2+\sigma^2}\left[|\hat{h}_k|^2x_k-\hat{h}_k^H\tilde{h}_kx_k+\hat{h}_k^Hz_k\right]
\end{equation}
and we need only to design the detector of $x_k$ based on $b_{k}y_k$. Since the goal of this section is only to show that our channel estimation results have an effect on receiver performance, we leave the design of optimal detectors under imperfect channel estimation for future work. Instead, we demonstrate the performance when the receiver applies the optimal decision for the perfect channel estimation case, which are the constellation demodulators built in MATLAB.

The Bit Error Rate (BER)  vs SNR for QPSK and 16QAM inputs in the OFDM system with the equalizer \eqref{eq:mmsesimple} is represented in Fig \ref{fig:ber128}. We represent the Rayleigh channel BER for reference as a dashed black line. Results for this channel show the BER decays as $\textnormal{SNR}^{-1}$ \cite{goldsmith2005book}. We represent the BER on a  NYU mmWave MPC model with perfect channel state information as a solid black line, noting that the BER curve displays a similar shape about 1dB lower.

The magenta curve represents the BER with the non-sparse ML estimator benchmark, which is $3$ dB worse than the perfect CSI. This is consistent with the fact that for $N=M$ we have $\nu^2_{LS-M}=\sigma^2$ in \eqref{eq:errMLMK}. We adopted a decisor that treats estimation error as noise, and hence it behaves as if the ``effective noise power'' was $\nu^2_{LS-M}+\sigma^2=2\sigma^2$. We represent in red the BER with the genie-aided ML sparse estimator benchmark. A gain of $2.1$ dB versus the non-sparse ML estimator \eqref{eq:MLM}, \eqref{eq:MLMK} and a gap of $.9$ dB versus perfect channel knowledge is achieved, which is consistent with  \eqref{eq:errMLaK} and an ``effective noise power'' $(1+L/N)\sigma^2$ with $L/N=28/128$.

We consider again OMP with $N_T=M$, $N_T=4M$ and OMPBR. At low SNR the three OMP variants achieve an almost identical BER than the genie-aided sparse benchmark, and a gain of $2$ dB with regard to the non-sparse estimator benchmark. This is because at low SNR $\hat{L}<L/2$ and the error with OMP is comparable to the genie aided benchmark. At high SNR the gain of OMPBR drops down to $1.4$ dB. This is consistent with the interpretation of $\nu^2+\sigma^2$ as ``effective noise power'',  which for $\hat{L}=L=28$ predicts a gap between OMPBR and the sparse genie-aided benchmark of at least $10\log_{10}\left(\frac{1+2\frac{28}{128}}{1+\frac{28}{128}}\right)=.71$ dB. Moreover, OMP with $N_T=M$ performs even worse due to the limitations of its  dictionary.
 
\begin{figure}
 \centering
 \subfigure[QPSK]{
  \includegraphics[width=.95\columnwidth]{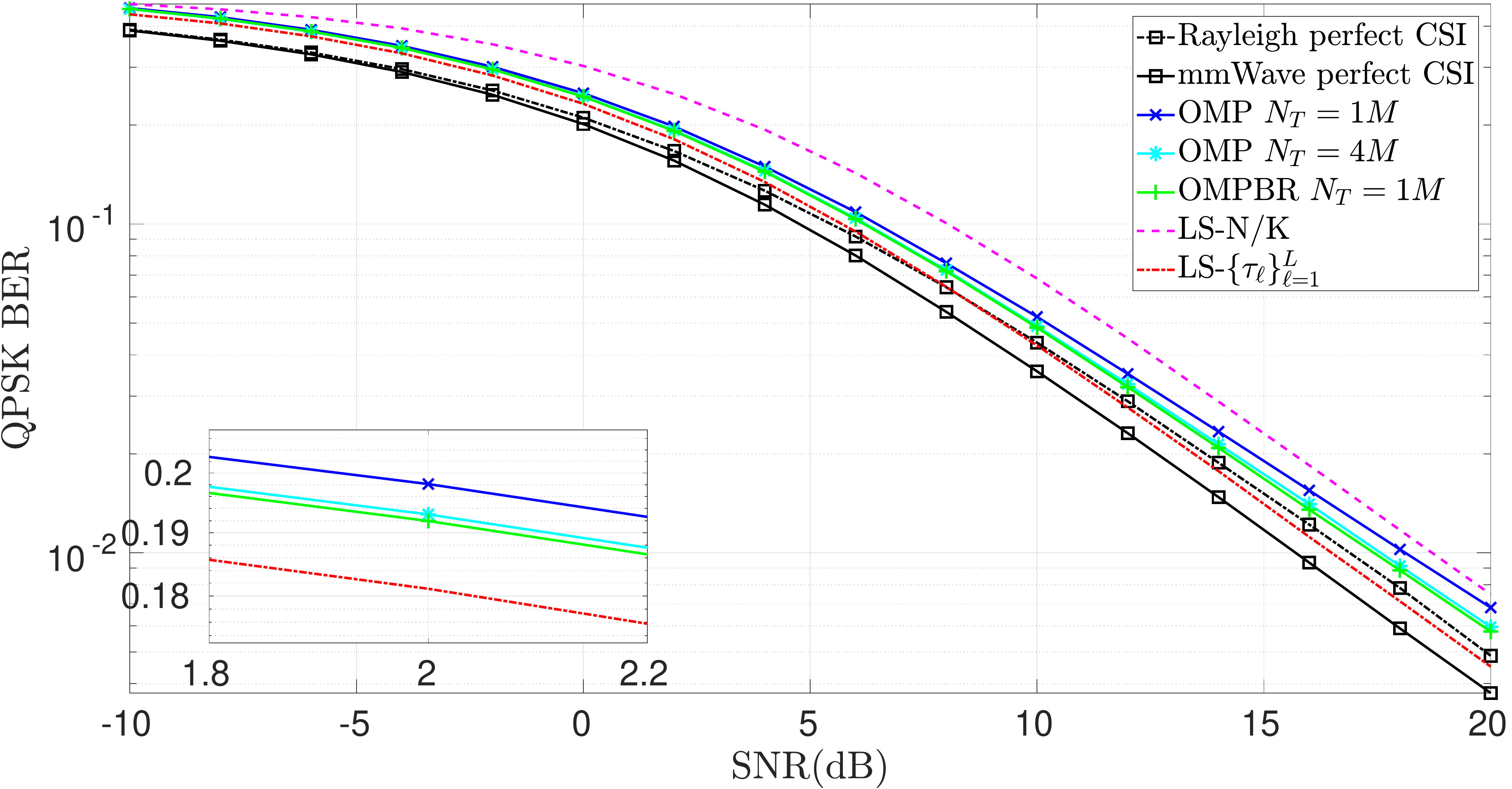}
 \label{fig:ber128qpsk}
  }
 \subfigure[16-QAM]{
  \includegraphics[width=.95\columnwidth]{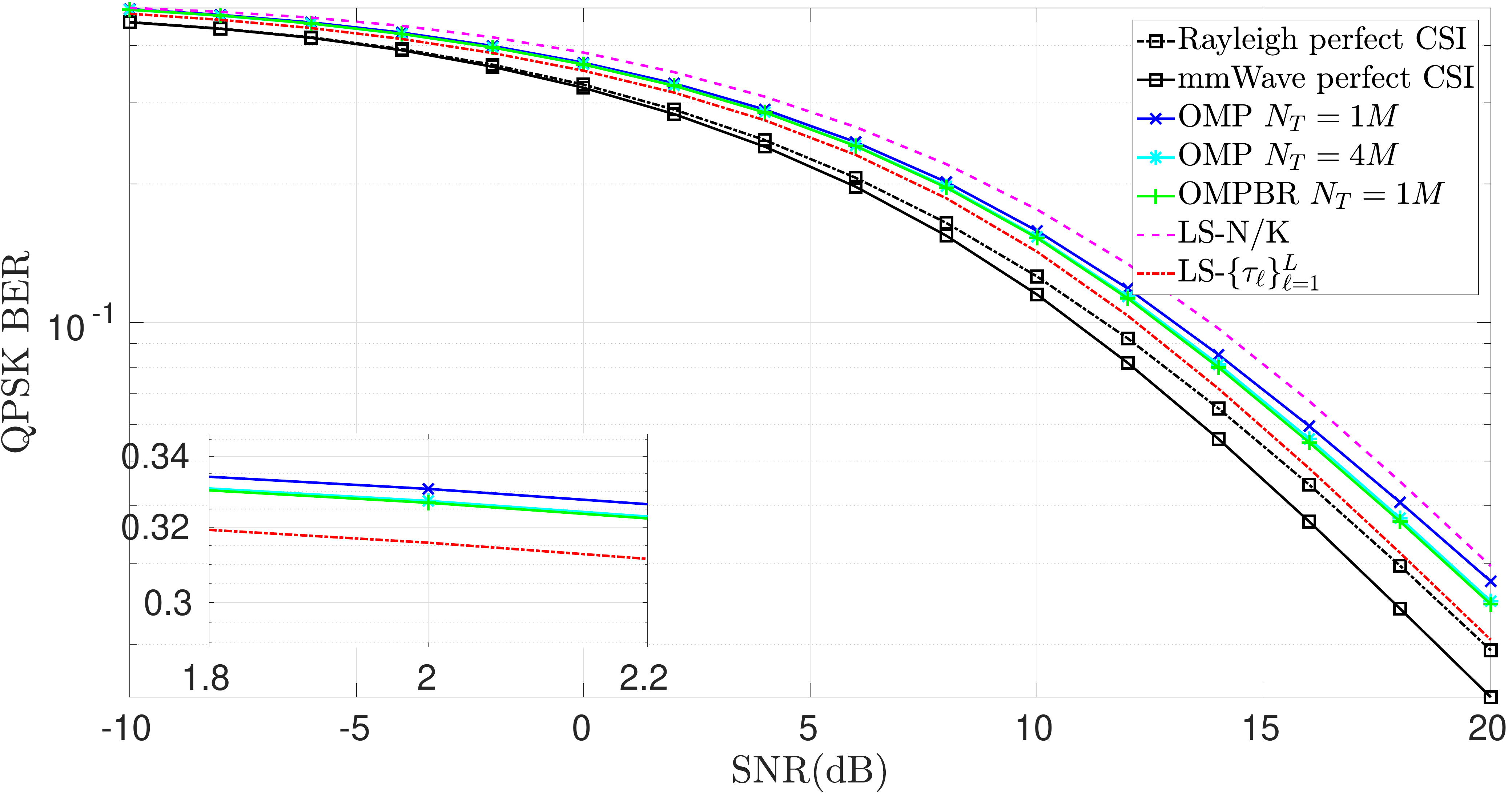}
 \label{fig:ber12816qam}
  }
  \caption{BER with $N_{p}=128$ pilots.}
 \label{fig:ber128}
\end{figure}

\section{Conclusions and Future Work}
\label{sec:conclusions}

We have analyzed the impact of non-Gaussian MPC amplitude distributions in CS sparse channel estimation. We have shown that in sparse channels where the MPC amplitude distribution has a high fourth moment, the channel power concentrates in a subset of \textit{dominant MPCs}, which are the only MPCs that any CS algorithm must retrieve. We have also shown that the estimation error of OMP is proportional to the number of retrieved MPCs (iterations). Therefore in non-Gaussian MPC amplitude models with high fourth moment OMP achieves a lower error and performs much better, equaling the performance of BPDN. Our compressibility analysis extends an existing study of the compressibility of random i.i.d. vectors. We used the Compressibility Index to score the compressibility of arbitrary channel vectors. Our analysis characterizes an ``oracle'' lower bound that holds for any CS algorithm and is not particular to OMP. The analysis shows that the higher the fourth moment of the MPC amplitude distribution, the lower the CI of the channel, the fewer dominant MPCs there are, and the lower the MSE of CS estimation is. 

Among others, this affects mmWave channel models with lognormal MPC amplitudes. These models present ``statistical sparsity'' in the sense that the set of MPCs has low CI. This makes mmWave channel CS estimation MSE lower than in more popular Gaussian MPC models. Improvement of channel estimation MSE directly translates into the performance of a receiver. For ease of exposition, we presented our analysis for a single antenna system model where only delay-domain sparse channel estimation is considered. However, our results can be extended to hybrid analog-digital MIMO OFDM mmWave channel estimation as in \cite{7953407,rodriguez2017frequency,Mo2017,Venugopal2017}.
 
\appendices

\begin{figure*}[!t]
\normalsize
\setcounter{MYtempeqncnt}{\value{equation}}
\setcounter{equation}{41}

\begin{equation}
\label{eq:resexpanded}
 \begin{split}
    \|\rr_{\hat{L}}\|^2&=N\rho(\hat{L})+\|\z_N\|^2-\frac{N}{K}\|\tilde{\h}_S\|^2\\
    &\quad+2\Re\{\h_M^H[(\I_M-\Pb_{\hat{\Tau}}\Pb_{\hat{\Tau}}^\dag)^H\F_{N/K,M}^H(\I-\F_{N/K,M}\Pb_{\hat{\Tau}}(\F_{N/K,M}\Pb_{\hat{\Tau}})^\dag)]\z_N\}\\
 \end{split}
\end{equation}
\setcounter{equation}{\value{MYtempeqncnt}}

\hrulefill
\vspace*{4pt}
\end{figure*}

\section{}
\label{app:firsterm}

We define $\breve{\z}=\sqrt{\frac{K}{N}}\F_{N/K,M}^H\z_N$, where if $\z_N$ is AWGN, then $\breve{\z}$ is also AWGN with covariance $\sigma^2\I_M$. Since $\Pb_{\hat{\Tau}}(\hat{\bb}-\bb)=\Pb_{\hat{\Tau}}(\F_{N/K}\Pb_{\hat{\Tau}})^\dag\y_{N}-\Pb_{\hat{\Tau}}(\F_{N/K}\Pb_{\hat{\Tau}})^\dag\h_N=\Pb_{\hat{\Tau}}\Pb_{\hat{\Tau}}^{\dag}\sqrt{\frac{K}{N}}\breve{\z}$, we can write
\begin{equation}
\begin{split}\frac{\Ex{\z}{\|\tilde{\h}_S\|^2}}{K}&=\frac{\Ex{\z}{\|\Pb_{\hat{\Tau}}(\hat{\bb}-\bb)\|^2}}{K}\\
&=\frac{\Ex{\breve{\z}}{\breve{\z}^H\Pb_{\hat{\Tau}}(\Pb_{\hat{\Tau}}^H\Pb_{\hat{\Tau}})^{-1}\Pb_{\hat{\Tau}}^H\breve{\z}}}{N}\\
&=\frac{\Ex{\breve{\z}}{\tr\{\Pb_{\hat{\Tau}}^H\breve{\z}\breve{\z}^H\Pb_{\hat{\Tau}}(\Pb_{\hat{\Tau}}^H\Pb_{\hat{\Tau}})^{-1}\}}}{N}.\\\end{split}
\end{equation}

Here $\Pb_{\hat{\Tau}}$ depends on $\breve{\z}$. If $\Pb_{\hat{\Tau}}$ was independent of $\breve{\z}$, we would be able to write a closed form expression as we had for $\Pb_{\{\tau_\ell\}_{\ell=1}^{L}}$ in \eqref{eq:errMLaK}. However, in OMP $\Pb_{\hat{\Tau}}$ depends on $\breve{\z}$ via $\y_{N/K}$. Still, we can note that  in the limit as $\sigma^2$ approaches zero the dependency vanishes, and thus
\begin{equation}
\begin{split}
\frac{\Ex{\z}{\|\tilde{\h}_S\|^2}}{K}&\stackrel{\sigma^2\ll1}{\xrightarrow{\hspace{.7cm}}}\frac{\tr\{\Pb_{\hat{\Tau}}^H\Ex{\breve{\z}}{\breve{\z}\breve{\z}^H}\Pb_{\hat{\Tau}}(\Pb_{\hat{\Tau}}^H\Pb_{\hat{\Tau}})^{-1}\}}{N}\\
&=\frac{\Ex{\z}{\hat{L}}\sigma^2}{N}.\\
\end{split}
\end{equation}

We convert this asymptotic result into a lower bound for the case of OMP with an orthogonal dictionary, assuming that $N_T=M$ and that $p(t)$ is a Nyquist pulse. This leads to $\Pb_{N_T}=\I_M$ and the matrix $\Pb_{\hat{\Tau}}$ is a subset of the columns of the identity.  We denote a diagonal ``selection mask'' matrix $\D(\hat{\Tau})=\Pb_{\hat{\Tau}}\Pb_{\hat{\Tau}}^H$ which satisfies $D_{n,n}=1$ if $nT\in\hat{\Tau}$. Using this matrix and $(\F_{N/K,M}\Pb_{\hat{\Tau}})^{\dag}=\Pb_{\hat{\Tau}}^{\dag}\frac{K}{N}\F_{N/K,M}^H$, the OMP estimator takes the form of a diagonal selection mask multiplied by a pre-calculated vector. Conveniently, this pre-calculated term is identical to \eqref{eq:MLM}.
\begin{equation}
\label{eq:orthLS}
\hat{\bb}_{N_T}=\Pb_{\hat{\Tau}}\frac{K}{N}\Pb_{\hat{\Tau}}^{\dag}\F_{N/K,M}^H\y_N=\D(\hat{\Tau})\hat{\h}_{M}^{ML-M}.
\end{equation}

Using this notation we can write the first error term as a function of $\D(\hat{\Tau})$ as $\frac{\|\tilde{\h}_S\|^2}{K}=\frac{\|\D(\hat{\Tau})\breve{\z}\|^2}{N}$. Examining \eqref{eq:orthLS} and Alg. \ref{alg:OMP} we note that the $n$-th step of the greedy OMP algorithm adds to $\hat{\Tau}$ the $n$-th largest elements of the vector $\hat{\h}_{M}^{ML-M}$. Thus the probability of the event $D_{n,n}=1$ corresponds to the probability that $\hat{h}_{M}^{ML-M}(n)$ is one of the $\hat{L}$ largest elements of the vector $\hat{\h}_{M}^{ML-M}$. Since $\breve{\z}$ is AWGN, the variable $\hat{\h}_{M}^{ML-M}=\h_M+\breve{\z}$ is a Gaussian centered at $\h_M$. From this we deduce that the covariance of $D_{n,n}$ and $|\breve{z}_n|^2$ is non-negative. Therefore

\begin{equation}
\label{eq:1sterrterm}
 \begin{split}
  \frac{\Ex{\z}{\|\D(\hat{\Tau}_{\hat{L}})\breve{\z}\|^2}}{N}&=\frac{\sum_{n=1}^{M}\Ex{\z}{D_{n,n}|\breve{z}_n|^2}}{N}\\
&\stackrel{a}{\geq}
  \frac{\sum_{n=1}^{M}\Ex{\z}{D_{n,n}}\Ex{\z}{|\breve{z}_n|^2}}{N}\\
&=\frac{\Ex{\z}{\hat{L}}}{N}\sigma^2
 \end{split}
\end{equation}
where $(a)$ follows from the non-negative covariance and would be an equality if $D_{n,n}$ was independent of $\breve{z}_n$. This shows that indeed $\frac{\Ex{\z}{\hat{L}}}{N}\sigma^2$ is a lower bound on $\Ex{\z}{\frac{\|\tilde{\h}_S\|^2}{K}}$ for $N_T=M$ and $p(t)$ a Nyquist pulse. This lower bound is tight for $\sigma^2\to0$ as $D_{n,n}$ is independent of $\breve{z}_n$.

\section{}
\label{app:secondterm}

OMP stops at the first iteration $i$ that satisfies the stop condition $\|\rr_{i}\|^2\leq \xi$ where the residual $\rr_{i}$ is defined by lines 5 and 12 of Alg \ref{alg:OMP}. 
Using $\frac{K}{N}\F_{N/K,M}^H\F_{N/K,M}=\I_M$ and $\y_N=\F_{N/K,M}\h_M+\z_N$, the residual $\rr_{\hat{L}}\triangleq (\I_N-\F_{N/K,M}\Pb_{\hat{\Tau}}\Pb_{\hat{\Tau}}^\dag\frac{K}{N}\F_{N/K,M}^H)\y_N$ can be written as
\begin{equation}
 \begin{split}
    \rr_{\hat{L}}&=\F_{N/K,M}(\I_M-\Pb_{\hat{\Tau}}\Pb_{\hat{\Tau}}^\dag)\h_M\\
    &\quad+(\I -\F_{N/K,M}\Pb_{\hat{\Tau}}(\F_{N/K,M}\Pb_{\hat{\Tau}})^\dag)\z_N.\\    
 \end{split}
\end{equation}
Taking the magnitude, we apply the definition of $\rho(\hat{L})$. Since $\F_{N/K,M}^H\F_{N/K,M}=\frac{N}{K}\I_M$, $\|\F_{N/K,M}\Pb_{\hat{\Tau}}(\F_{N/K,M}\Pb_{\hat{\Tau}})^\dag\z_N\|^2=\frac{N}{K}\|\tilde{\h}_S\|^2$ and since there is an orthogonal projection we have $\|\z_N\|^2=\|(\I -\F_{N/K,M}\Pb_{\hat{\Tau}}(\F_{N/K,M}\Pb_{\hat{\Tau}})^\dag)\z_N\|^2+\frac{N}{K}\|\tilde{\h}_S\|^2$, so we can write \eqref{eq:resexpanded} where we can ignore the last term because the average of $\z_N$ is zero.

\stepcounter{equation}

We observe that in each $i$-th iteration OMP reduces the residual $\|\rr_{i}\|^2$ so that $\hat{L}$ is the index of the first iteration where the stop condition is met. Thus, if we make the assumption that $\|\rr_{i}\|^2$ decreases smoothly in the sense that $\|\rr_i\|^2-\|\rr_{i-1}\|^2$ is small, this means that we can approximate the stop condition as an equality at $i=\hat{L}$, and we get $\|\rr_{\hat{L}}\|^2\simeq \xi$. Choosing $\xi=N\sigma^2$ we have
\begin{equation}
 \begin{split} 
  \Ex{\z}{\rho(\hat{L})}&\simeq\frac{N\sigma^2-\Ex{\z}{\|\z_N\|^2}+\frac{N}{K}\Ex{\z}{\|\tilde{\h}_S\|^2}}{N}\\
  &\quad-\frac{\Ex{\z}{2\Re\{\h_M^H[\dots]\z_N\}}}{N}\\
  &=\Ex{\z}{\|\tilde{\h}_S\|^2}/K
 \end{split}
\end{equation}

\begin{figure*}[!t]
\normalsize
\setcounter{MYtempeqncnt}{\value{equation}}
\setcounter{equation}{44}

\begin{equation}
\label{eq:mmseasprod}
\B_{MMSE} =\min_{\B} \Ex{\x,\tilde{\h},\z}{(\B\D(\hat{\h})\x+\B\D(\tilde{\h})\x+\B\z-\x)^H(\B\D(\hat{\h})\x+\B\D(\tilde{\h})\x+\B\z-\x)}
\end{equation}

\setcounter{equation}{46}

\begin{equation}
\label{eq:16termmmse3}
\min_{\B} \tr\Big\{(\Sb_{\x}\cdot\hat{\h}\hat{\h}^H)\B^H\B-\Sb_{\x}\D(\hat{\h})^H\B^H+(\Sb_{\x}\cdot\Sb_{\tilde{\h}_K})\B^H\B+\sigma^2\B^H\B-\B\D(\hat{\h})\Sb_{\x}+\Sb_{\x}\Big\}.
\end{equation}
\setcounter{equation}{\value{MYtempeqncnt}}

\hrulefill
\vspace*{4pt}
\end{figure*}

\section{}
\label{app:otherworks}

In this appendix we discuss other works that have also reported connections between the fourth moment and compressibility or sequence unevenness. This motivates using our fourth-moment based CI metric as our measure of channel compressibility.

In \cite{Gribonval2012} an exact calculation of $\overline{\rho}(d)$ is provided for large i.i.d. vectors. In essence, Appendix A of \cite{Gribonval2012} shows that if $\vv$ is a size-$M$ i.i.d. vector where each coefficient of magnitude $|v_m|^2$ has a CDF $F_{|v|^2}(y)$, by the CLT we have 
\begin{equation}
\label{eq:intFv}
\lim_{M\to\infty} \overline{\rho}(d) = 1-\frac{\int_{0}^{\delta} yF_{|v|^2}(y)dy}{\Ex{}{|v|^2}}
\end{equation}
where $\delta=F_{|v|^2}^{-1}(1-\frac{d}{M})$. This expression depends on $F_{|v|^2}(y)$, and $\overline{\rho}(d)$ will decay fast if the variance of $F_{|v|^2}(y)$ is large. If $\Ex{}{v}=0$ this is related to the fourth moment of $v$ by $\frac{\textnormal{Var}[|v|^2]}{\Ex{}{|v|^2}^2} = \kappa(v)-1$.

In narrowband wide-sense stationary uncorrelated scattering time-varying fading channels, the Amount of Fading (AF) is a measure of channel variability \cite{Rao2015}. If we denote the channel gain by $h(t)$, with $\Ex{}{h(t)}=0$ and an i.i.d. distribution $f(h(t))$ for each $t$, the AF is $AF=\frac{\textnormal{Var}[|h(t)|^2]}{\Ex{}{|h(t)|^2}^2}$. Thus $AF=\kappa(h(t))-1$, and again we observe a relation to the fourth moment.

Finally, \cite{fgomezUnified} studied non-coherent wideband fading channel capacity. The expression \cite[Eq. (9)]{fgomezUnified} is a ``maximum bandwidth threshold'' above which capacity degrades due to channel uncertainty. This threshold grows with the channel kurtosis, suggesting that heavy-tailed channel distributions are associated with less capacity degradation due to channel uncertainty.

\section{}
\label{app:mmse}
We use $\D(\x)\h=\D(\h)\x$ and start by replacing $\y$ into \eqref{eq:mmsestatement} and expanding the product, producing \eqref{eq:mmseasprod}.

\stepcounter{equation}

The first product in \eqref{eq:mmseasprod} is $\Ex{\x,\tilde{\h},\z}{\x^H\D(\hat{\h})^H\B^H\B\D(\hat{\h})\x}$. Using the trace rotation we get
\begin{equation}
 \begin{split}
  &\Ex{\x,\tilde{\h},\z}{\tr\{\D(\hat{\h})\x\x^H\D(\hat{\h})^H\B^H\B\}}\\
  &\quad\stackrel{a}{=}\tr\{\Ex{\x,\tilde{\h},\z}{(\x\x^H\cdot\hat{\h}\hat{\h}^H)\B^H\B}\}\\
  &\quad\stackrel{b}{=}\tr\{(\Sb_{\x}\cdot\hat{\h}\hat{\h}^H)\B^H\B\},
 \end{split}
\end{equation}
where $(a)$ stems from $\D(\hat\h)\x=\hat\h\cdot\x$, and $(\x\cdot\hat\h)(\x\cdot\hat\h)^H=\x\x^H\cdot\hat\h\hat\h^H$ and $(b)$ from the independency of $\x$, $\z$, and $\tilde{\h}$. Assuming $\tilde{\h}$, $\z$ and $\x$ are zero-mean, and $\Ex{}{\x\x^H}=\Sb_\x$, $\Ex{}{\z\z^H}=\sigma^2\I$, and $\Ex{}{\tilde{\h}\tilde{\h}^H}=\Sb_{\tilde{\h}}$, we apply the same steps to all 16 cross-products in \eqref{eq:mmseasprod} producing \eqref{eq:16termmmse3}.

\stepcounter{equation}

Defining $M_1=(\Sb_{\x}\cdot\hat{\h}\hat{\h}^H)+(\Sb_{\x}\cdot\Sb_{\tilde{\h}})+\sigma^2\I$ and $M_2=\D(\hat{\h})\Sb_{\x},$ we must solve
$\displaystyle\min_{\B}\tr\left\{ M_1\B^H\B-\B M_2-(\B M_2)^H+\Sb_{\x}\right\}.$ This problem is well known in the perfect-CSI case, with solution $\B_{MMSE}=M_1^{-1}M_2^H.$ Substituting $M_1$ and $M_2$ concludes the proof of \eqref{eq:mmsegen}.



\begin{IEEEbiography}
[{\includegraphics[width=1in,height=1.25in,clip,keepaspectratio]{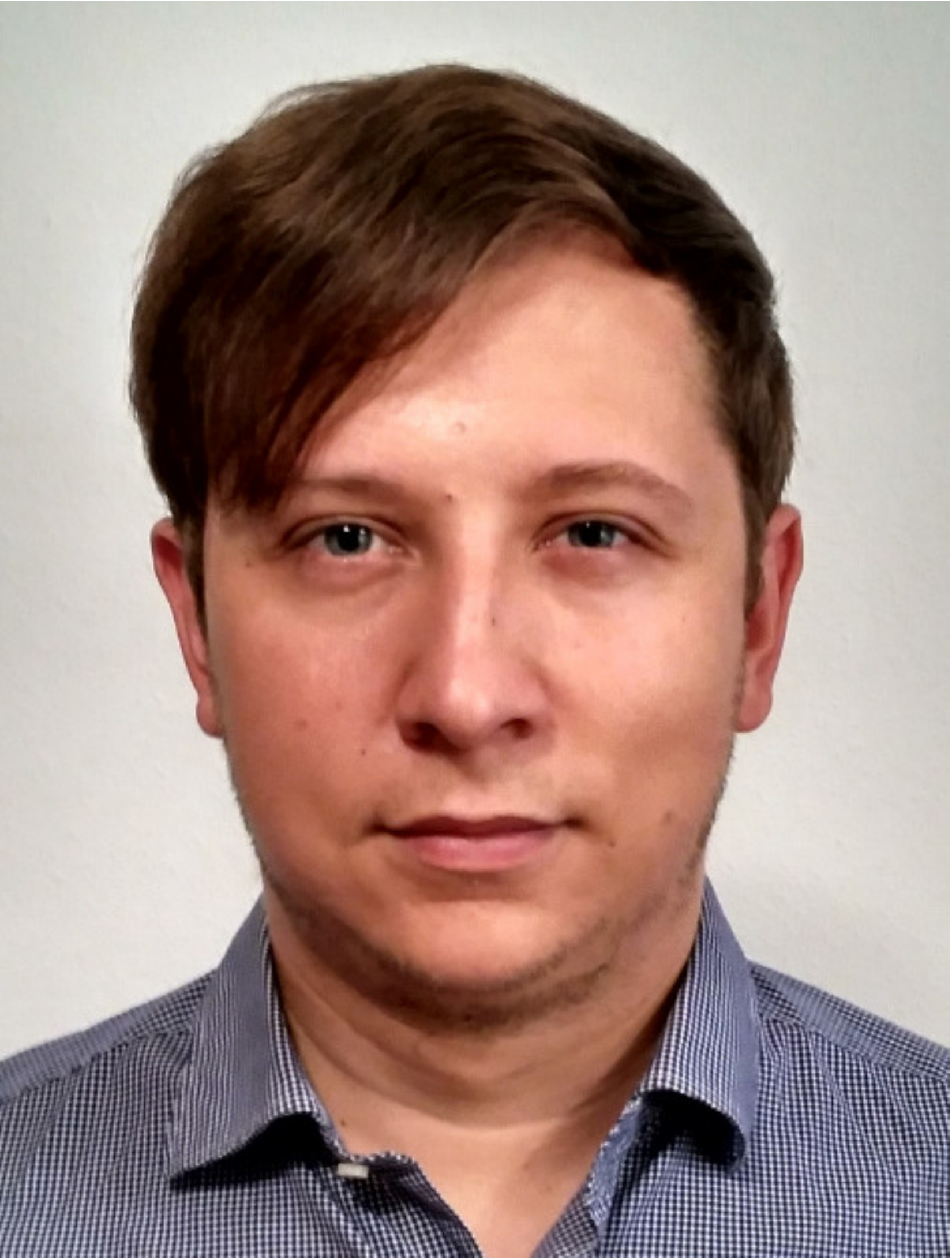}}]
{Felipe G\'omez-Cuba}
received the Ingeniero de Telecomunicaci\'on degree in 2010, M.Sc in Signal Processing Applications for Communications in 2012, and a PhD degree in 2015 from the University of Vigo, Spain. He worked as a researcher in the Information Technologies Group (GTI), University of Vigo, (2010--2011), the Galician Research and Development center In Advanced Telecommunications (GRADIANT),  (2011--2013), the NYUWireless center at NYU Tandon School of Engineering (2013--2014) and in University of Vigo with the FPU grant from the Spanish MINECO (2013--2016). He has a Marie Curie Individual Fellowship - Global Fellowship with the Dipartimento d'Engegneria dell'Informazione, University of Padova, Italy (2016-present) and he was a postdoctoral scholar with the Department of Electrical Engineering, Stanford University, USA (2016-2018). His main research interests are new paradigms in wireless networks such as cooperative communications, ultra high density cellular systems, wideband communications and massive MIMO.
\end{IEEEbiography}

\begin{IEEEbiography}
[{\includegraphics[width=1in,height=1.25in,clip,keepaspectratio]{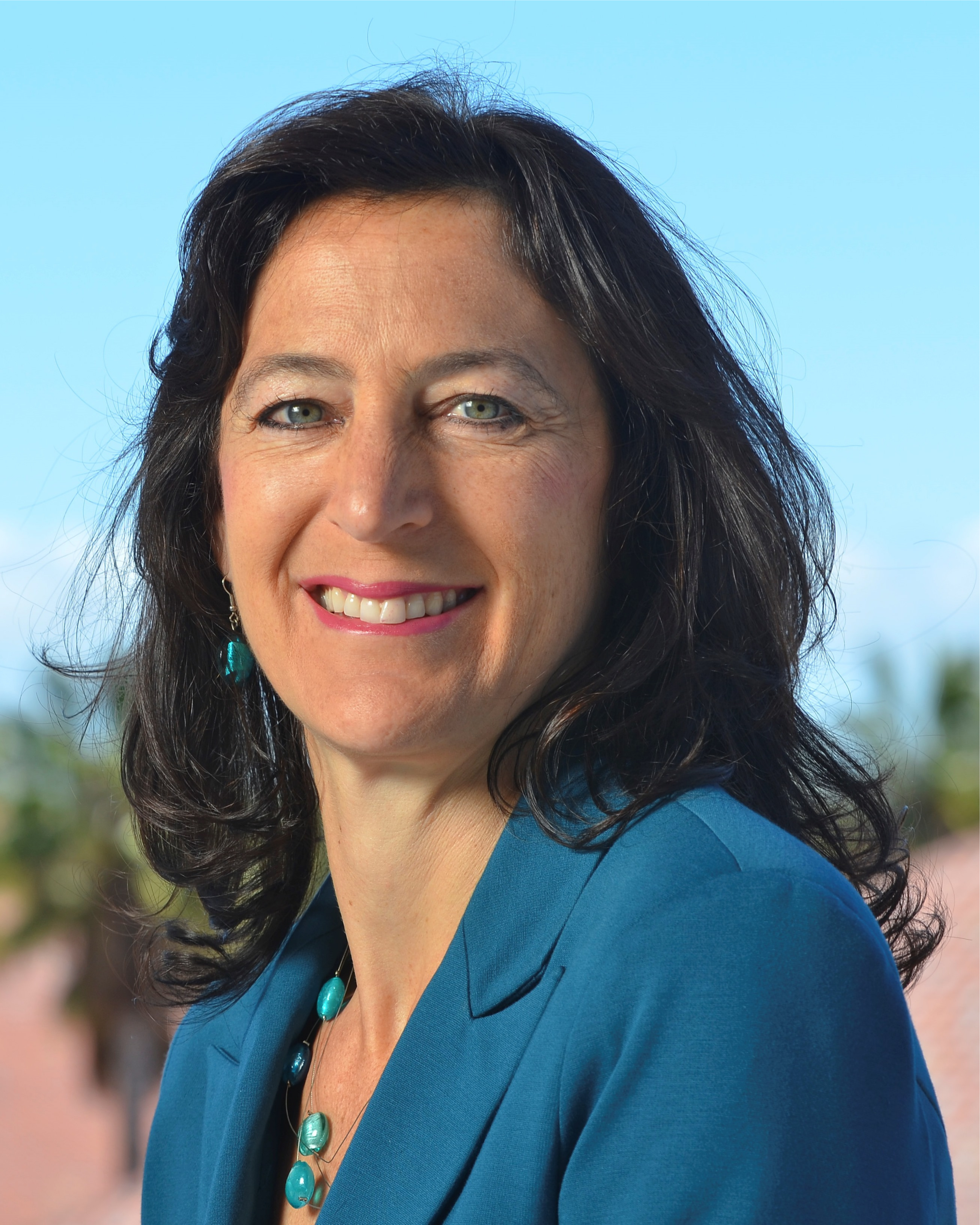}}]
{Andrea Goldsmith}
is the Stephen Harris professor in the School of Engineering and a professor of
Electrical Engineering at Stanford University. Her research interests are in information theory,
communication theory, and signal processing, and their application to wireless communications,
interconnected systems, and neuroscience. She founded and served as Chief Technical Officer of Plume
WiFi (formerly Accelera, Inc.) and of Quantenna (QTNA), Inc, and she currently serves on the Board of
Directors for Medtronic (MDT) and Crown Castle Inc (CCI). Dr. Goldsmith is a member of the National
Academy of Engineering and the American Academy of Arts and Sciences, a Fellow of the IEEE and of
Stanford, and has received several awards for her work, including the IEEE Sumner Technical Field
Award, the ACM Athena Lecturer Award, the ComSoc Armstrong Technical Achievement Award, the
WICE Mentoring Award, and the Silicon Valley/San Jose Business Journal’s Women of Influence Award.
She is author of the book ``Wireless Communications and co-author of the books ``MIMO Wireless
Communications and “Principles of Cognitive Radio,” all published by Cambridge University Press, as
well as an inventor on 29 patents. She received the B.S., M.S. and Ph.D. degrees in Electrical Engineering
from U.C. Berkeley.

Dr. Goldsmith is currently the founding Chair of the IEEE Board of Directors Committee on Diversity,
Inclusion, and Ethics. She served as President of the IEEE Information Theory Society in 2009 and as
founding Chair of its student committee. She has also served on the Board of Governors for both the
IEEE Information Theory and Communications Societies. At Stanford she has served as Chair of
Stanford’s Faculty Senate and for multiple terms as a Senator, and on its Academic Council Advisory
Board, Budget Group, Committee on Research, Planning and Policy Board, Commissions on Graduate
and on Undergraduate Education, Faculty Women’s Forum Steering Committee, and Task Force on
Women and Leadership.
\end{IEEEbiography}

\end{document}